\newcommand{\comment}[1]{}
 \newtheorem{theorem}{Theorem}
 \newtheorem{lemma}{Lemma}
\newtheorem{prop}{Proposition}
\title{Explicit back-off rates for achieving target throughputs in CSMA/CA networks}
\author{B. Van Houdt\\
Dept. Mathematics and Computer Science\\
University of Antwerp, Belgium}
\date{}
\begin{document}

\maketitle

\begin{abstract}
CSMA/CA networks have often been analyzed using a stylized model
that is fully characterized by a vector of back-off rates
and a conflict graph. Further, for any achievable throughput vector $\vec \theta$
the existence of a unique vector $\vec \nu(\vec \theta)$ of back-off rates
that achieves this throughput vector was proven. Although this unique
vector can in principle be computed iteratively, the required time complexity 
 grows exponentially in the network size, making this only feasible for small networks.

In this paper, we present an explicit formula for the unique vector of back-off rates  $\vec \nu(\vec \theta)$ needed
to achieve any achievable throughput vector $\vec \theta$ provided that the network has a 
{\it chordal} conflict graph. This class of networks contains
a number of special cases of interest such as (inhomogeneous) line networks and networks with an 
acyclic conflict graph.
Moreover, these back-off rates are such that the back-off rate of a node only depends
on its own target throughput and the target throughput of its neighbors and can be
determined in a {\it distributed} manner.

We further indicate that back-off rates of this form cannot be obtained in general for networks
with non-chordal conflict graphs. For general conflict graphs we nevertheless show how to adapt the
back-off rates when a node is added to the network when its interfering nodes
form a clique in the conflict graph. 
Finally, we introduce a distributed chordal approximation algorithm
for general conflict graphs which is shown (using numerical examples)
to be more accurate than the Bethe approximation.
\end{abstract}

\begin{IEEEkeywords}
CSMA/CA networks, conflict graphs, back-off rates
\end{IEEEkeywords}

\section{Introduction}
Consider a carrier-sense multiple access (CSMA) network with collision avoidance (CA) where a group of nodes shares
a single communication channel. Several nodes in such a network can transmit packets simultaneously, but
whenever two (or more) transmissions interfere with each other a {\it collision} takes place, which typically results
in a failed transmission. As such nodes in a CSMA network try to avoid collisions by {\it sensing} the channel before
packet transmission. If the channel is sensed busy, a node postpones its transmission attempt for some time. 

An often studied model for CSMA/CA networks is the so-called {\it ideal} model 
\cite{boorstyn1,durvy2,durvy1,jiang2,jiang1,vandeven3,vandeven5,vandeven1,wang5,yun1}. The ideal CSMA/CA
model considers a network with a fixed set of $n$ nodes\footnote{The nodes in the conflict graph
are often regarded as being the links in the real network.} and is fully characterized by a fixed conflict graph $G$ and a fixed
vector of back-off rates $\vec \nu = (\nu_1,\ldots,\nu_n)$. The conflict graph $G$ identifies the pairs of nodes that interfere with each other,
while the vector $(\nu_1,\ldots,\nu_n)$ determines the mean time that the nodes have 
to sense the channel idle before they are allowed to start a transmission. 

One of the key assumptions of the ideal CSMA/CA model is that sensing is instantaneous, which implies that collisions 
cannot occur (as the probability that two nodes start transmitting at exactly the same time is zero). Another
important assumption is that each of the $n$ nodes always has packets ready for transmission, that is,
the network is assumed to be saturated. Further perfect sensing and packet transmission is assumed.
While these assumptions clearly do not hold in practice, this model was shown to predict the throughput 
of real CSMA/CA based ad-hoc networks in a surprisingly accurate manner in some cases \cite{wang5} 
(especially when the maximum contention window is large). Further, networks with saturated users can be
used to study the stability of networks with unsaturated users that behave in a greedy manner, i.e., that
transmit a dummy packet whenever their buffer is empty.

While the product form solution for the steady state probabilities of the ideal CSMA/CA model has been
established long ago \cite{boorstyn1} and the set $\Gamma$ of achievable throughput vectors $\vec \theta = (\theta_1,\ldots,\theta_n)$
has been identified in \cite{jiang1}, very few explicit results are available on how to set the back-off rates 
$\vec \nu$ to achieve a given vector $\vec \theta \in \Gamma$ (where $\Gamma$ clearly depends on the conflict graph $G$). 
In \cite{vandeven1} an explicit formula was presented to achieve fairness in a line network where each
node interferes with its $\beta$ left and right nodes. More recently, by relying on some existing results in statistical physics, 
explicit formulas for the back-off vector needed to achieve a given throughput vector were presented in case the
conflict graph is a tree \cite{yun1}. 
The existence of a unique vector of back-off rates for each achievable throughput vector was proven in \cite{vandeven5},
where iterative algorithms to compute this unique vector were discussed. While these iterative algorithms are guaranteed to
converge, computation times can easily become prohibitive as they grow exponentially in the network size.

The main objective of this paper is to identify the set of conflict graphs $G$ for which simple explicit expressions can be
obtained for the vector of back-off rates $\vec \nu$ that achieves a given throughput vector $\vec \theta \in \Gamma$.
We show that an explicit expression can be obtained for any {\it chordal} conflict graph, thereby generalizing existing
results for line networks and networks that have a tree as a conflict graph. In fact we present two
explicit expressions for the required back-off rates: one based on a clique\footnote{A clique in a graph is a subset of nodes such that
its induced subgraph is complete.} tree and one that relies on a perfect elimination ordering of $G$.
Somewhat surprisingly these explicit expressions are such that the back-off rate $\nu_i$ of node $i$ only
depends on its own target throughput $\theta_i$ and the target throughput of its neighbors in the conflict graph $G$.

We also explore whether such explicit expressions can be obtained for non-chordal conflict graphs and show
that even in the simplest case (a ring network of size $4$), this is no longer the case. We do however prove 
a property on how to adapt the back-off rates when adding a node in a particular manner to a
general conflict graph and this property also explains why explicit expression can be obtained for chordal conflict graphs.
Further we also look into the possibility of defining a chordal approximation for networks with non-chordal conflict
graphs and propose a distributed chordal approximation that is more accurate than the Bethe approximation (for the numerical
experiments conducted). 

The paper is structured as follows.  In Section \ref{sec:model} we give a detailed description of the ideal CSMA/CA model
under consideration. In Section \ref{sec:special} we briefly discuss the case of acyclic conflict graphs
and (inhomogeneous) line networks. Section \ref{sec:chordal} contains the main results of the paper and presents explicit expressions
for the back-off rates when the conflict graph is chordal. In Section \ref{sec:approx} we explore  
the possibility of defining chordal approximations for general conflict graphs. 
Some results for non-chordal conflict graphs are presented in Section \ref{sec:beyond}.
Finally, related work is discussed in Section \ref{sec:related} and conclusions are drawn in Section \ref{sec:conclusion}.

\section{Model description}\label{sec:model}
Consider a network consisting of $n$ nodes that is fully characterized
by a vector of back-off rates $(\nu_1,\ldots,\nu_n)$ and an {\it undirected} conflict graph $G=(V(G),E(G))$, with $V(G)=\{1,\ldots,n\}$. 
A node is either active or inactive
at any point in time. The conflict graph $G$ specifies which pairs of nodes cannot be simultaneously active,
that is, nodes $i$ and $j$ cannot be active simultaneously if and only if $(i,j) \in E(G)$.
When a node becomes active, it remains active for some time before
becoming inactive again. An inactive node can only become active if none of its neighbors in $G$ are active.
When a node becomes inactive it starts a back-off period. As soon as one of the neighbors of an inactive node in $G$ becomes active, the 
back-off period of the inactive node is frozen and resumes when all of its neighbors are inactive again. A node becomes active when its  back-off
period ends. 

If the duration of the active period is exponential (with mean $1$) and the back-off period is exponentially
distributed with mean $1/\nu_i$ for node $i$, it is well-known \cite{boorstyn1} that this network evolves as a reversible Markov chain on the state space 
\[\Omega = \{(z_1,\ldots,z_n) \in \{0,1\}^n | z_i z_j = 0 \mbox{ if } (i,j) \in E(G)\},   \]
where node $i$ is active in state  $(z_1,\ldots,z_n)$ if and only if $z_i = 1$.
The steady state probabilities $\pi(\vec z)$, with $\vec z = (z_1,\ldots,z_n)$ of this Markov chain are given by
\begin{align}\label{eq:stst}
\pi(\vec z) = \frac{1}{Z_n} \prod_{i=1}^n \nu_i^{z_i}, 
\end{align}  
where 
\[Z_n = \sum_{\vec z \in \Omega}  \prod_{i=1}^n \nu_i^{z_i},\] 
is the normalizing constant.
Clearly, the throughput $\theta_i$ of node $i$ can be expressed as
\[\theta_i = \sum_{\vec z \in \Omega, z_i = 1} \pi(\vec z), \]
for $i=1,\ldots,n$. In \cite{jiang1} the set of achievable throughput vectors $\vec \theta = (\theta_1,\ldots,\theta_n)$
was shown to equal 
\begin{align}\label{eq:Gamma} 
\Gamma =  \left\{ \sum_{\vec z \in \Omega} \xi(\vec z) \vec z \middle| \sum_{\vec z \in \Omega} \xi(\vec z) =1, \xi(\vec z) > 0 \mbox{ for } \vec z \in \Omega \right\}.
\end{align}
In other words, a throughput vector $\vec \theta$ is achievable if and only if it belongs to the interior of the
convex hull of the set $\Omega$. Further, for each achievable vector $\vec \theta \in \Gamma$ there exists a unique vector $\vec \nu = (\nu_1,\ldots,\nu_n)$
of back-off rates that achieves $\vec \theta$ \cite{vandeven5}.

If the duration of an active period follows a phase-type distribution\footnote{The class of phase-type distributions is dense in the class of positive valued distributions.}, one obtains a somewhat more complicated Markov chain. However given that the mean of this distribution is $1$, one can show that all the nodes
have the same throughput as in the exponential case \cite{boorstyn1}. The same is true if we additionally
replace the exponential back-off period by one with a phase type distribution (with the same mean), 
irrespective of whether a node freezes its the back-off period when a neighbor becomes active \cite{vandeven3}.
If the back-off period is not frozen and expires when a neighbor is active, the node simply starts a new back-off period.
In short, the model considered in this paper is not restricted to exponential back-off periods and activity periods.

\section{Some special cases}\label{sec:special}
Before presenting the general result for chordal conflict graphs in the next section we discuss
some special cases first, that is, we first discuss acyclic conflict graphs, homogeneous and inhomogeneous line networks. 
One way to see that these are special cases is to note that the conflict graph of a homogeneous line network belongs to the class of 
$K$-trees\footnote{Any $K$-tree can be obtained by starting with the complete graph with $K+1$ nodes and
subsequently adding nodes one at a time by connecting a node to a clique of $K$ nodes.} (with $K=\beta$), while the one of an inhomogeneous line network belongs to the class of interval graphs\footnote{For any interval graph one can associate a set of consecutive integers to each node such that two
nodes are adjacent if and only if their corresponding sets of consecutive integers intersect.}.
Further, acyclic graphs are $1$-trees and any $K$-tree or interval graph is chordal.   
 
\subsection{Acyclic conflict graphs}\label{sec:tree}
In this section we introduce an explicit formula to set the back-off rates to achieve a given
achievable target throughput vector $\vec \theta = (\theta_1,\ldots,\theta_n)$ in a network consisting of $n$ nodes
provided that its conflict graph $G=(V(G),E(G))$ is acyclic, that is, it is a tree (as we assume interference is symmetric,
meaning $G$ is undirected).

Let $\mathcal{N}_i$ be the set of neighbors of $i$ in $G$ and define $\nu_i^{tree}(\vec \theta) $ as 
\begin{align}\label{eq:nui_tree}
\nu_i^{tree}(\vec \theta) = 
\frac{  \theta_i (1-\theta_i)^{|\mathcal{N}_i|-1}}
{ \prod_{j \in \mathcal{N}_i} (1-(\theta_i+\theta_j))},
\end{align}
for $i=1,\ldots,n$.
It is worth noting that the rate $\nu_i^{tree}(\vec \theta)$ is fully determined by the target throughput $\theta_i$ and
those of
its neighbors, meaning to set its back-off rate it suffices for a node to know the target throughput of the interfering
nodes. A direct proof of the following theorem is given in Appendix \ref{apx:proof_tree}:

\begin{theorem}\label{TH:TREE} 
Let $\vec \theta = (\theta_1,\ldots,\theta_n)$ be a positive vector with $\hat T=\max_{(k,j) \in E} (\theta_k+\theta_j) < 1$.
The throughput of node $i$, for $i=1,\ldots,n$, in a network with an acyclic conflict graph $G$ matches 
$\theta_i$ if and only if the back-off rates are set according to \eqref{eq:nui_tree}.
\end{theorem}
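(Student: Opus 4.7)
\medskip

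\noindent\textbf{Proof plan.} My plan is to construct a probability distribution on $\Omega$ whose node marginals are exactly $\vec\theta$, show that this distribution is of the product form \eqref{eq:stst} with the back-off rates $\nu_i^{tree}(\vec\theta)$ given by \eqref{eq:nui_tree}, and then invoke the uniqueness statement of \cite{vandeven5} to conclude. The structural fact I would exploit is that the product-form stationary distribution is a Markov random field with respect to $G$, and on a tree any such MRF admits an explicit factorisation in terms of its node and edge marginals.

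First, once we demand $\pi_i(1)=\theta_i$ at every node, the pairwise marginal on each edge $(i,j)\in E(G)$ is forced by the independent-set constraint $\pi_{ij}(1,1)=0$: consistency with the node marginals gives $\pi_{ij}(1,0)=\theta_i$, $\pi_{ij}(0,1)=\theta_j$ and $\pi_{ij}(0,0)=1-\theta_i-\theta_j$, and all four entries are strictly positive under the hypothesis $\hat T<1$ and $\vec\theta>0$. I then assemble the candidate distribution
\[\bar\pi(\vec z) \,=\, \prod_{i\in V(G)}\pi_i(z_i)\prod_{(i,j)\in E(G)}\frac{\pi_{ij}(z_i,z_j)}{\pi_i(z_i)\,\pi_j(z_j)}\]
and verify, by inductively marginalising a leaf $\ell$ with parent $p$ and using $\sum_{z_\ell}\pi_{p\ell}(z_p,z_\ell)=\pi_p(z_p)$ to collapse its contribution, that $\bar\pi$ is a genuine probability distribution on $\Omega$ whose single-node marginals are the prescribed $\pi_i$.

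To identify $\bar\pi$ with the product form, I would compute the ratio $\bar\pi(\vec z^{(+)})/\bar\pi(\vec z^{(-)})$ for two states in $\Omega$ that differ only in the $i$-th coordinate and in which every neighbour of $i$ is inactive. The singleton factor at $i$ contributes $\theta_i/(1-\theta_i)$ and each incident edge $(i,j)$ contributes $(1-\theta_i)/(1-\theta_i-\theta_j)$, so the total ratio equals $\nu_i^{tree}(\vec\theta)$. Since every state of $\Omega$ is reachable from $\vec 0$ by single-coordinate $0\to 1$ flips that stay inside $\Omega$ and at which the neighbours of the flipped coordinate are inactive, these ratios force $\bar\pi(\vec z)\propto\prod_i(\nu_i^{tree}(\vec\theta))^{z_i}$ on $\Omega$. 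Thus $\bar\pi$ is the product-form measure with back-off rates \eqref{eq:nui_tree} and achieves the target throughputs; the uniqueness result of \cite{vandeven5} then delivers the converse direction.

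The main obstacle I anticipate is the verification in the second paragraph that $\bar\pi$ really is a probability distribution with the correct single-node marginals. This is the tree version of belief-propagation consistency, and it uses the acyclicity of $G$ in an essential way: for a conflict graph with a cycle the same construction would only reproduce the Bethe approximation and not the exact measure, foreshadowing why \eqref{eq:nui_tree} cannot extend unchanged beyond the tree case. The remaining algebra that extracts \eqref{eq:nui_tree} from the ratio is routine but needs some care with the exponent $|\mathcal{N}_i|-1$, which arises because the $|\mathcal{N}_i|$ incident edges each contribute a factor of $(1-\theta_i)$ in their numerators while the singleton denominator contributes $(1-\theta_i)$ only once.
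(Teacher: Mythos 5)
Your proposal is correct, but it takes a genuinely different route from the paper. The paper's proof (Appendix A) works directly with the partition function: Lemma~\ref{lem:Ztree} computes, by induction on subtrees cut off by a single boundary edge, explicit closed forms for the normalizing constants $Z^{tree}_I(\vec\theta)$ and $Z^{tree}_n(\vec\theta)$ under the rates \eqref{eq:nui_tree}, and the throughput of node $i$ is then read off as $\nu_i^{tree}(\vec\theta)$ times a product of such constants over the components left after deleting $i$ and its neighbors, divided by $Z^{tree}_n(\vec\theta)$. You instead build the tree Markov-random-field factorization $\bar\pi(\vec z)=\prod_i\pi_i(z_i)\prod_{(i,j)}\pi_{ij}(z_i,z_j)/(\pi_i(z_i)\pi_j(z_j))$ from the forced node and edge marginals, check by leaf-stripping that it is a distribution with the prescribed marginals, and identify it with the hard-core Gibbs measure via single-site ratios along a build-up of each independent set from $\vec 0$; your ratio computation correctly yields $\nu_i^{tree}(\vec\theta)$, and both proofs delegate the ``only if'' direction to the uniqueness result of \cite{vandeven5}. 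Your route is essentially the statistical-physics argument that the paper acknowledges in Remark~1 (the Bethe factorization, exact on trees) but deliberately avoids in favor of a ``direct proof''; it is more conceptual and makes transparent why the formula cannot survive beyond trees, whereas the paper's computation produces the explicit partition function \eqref{eq:Zn_tree} as a by-product, which is the template for the clique-tree generalization in Lemma~\ref{lem:Zchordal2}. One small slip: you assert that ``all four entries'' of the edge marginal are strictly positive, but $\pi_{ij}(1,1)=0$ by construction; only the other three are positive, which is all the argument needs since that entry never multiplies a state of $\Omega$.
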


\noindent {\bf Remarks.}

1) The back-off rates given by \eqref{eq:nui_tree} were independently introduced in \cite[Equation 6]{yun1} as the Bethe approximation
for the CSMA stability problem. In \cite{yun1} the problem of finding the required back-off rates to achieve a given
throughput vector $\vec \theta$ is restated as a problem of minimizing the Gibbs free energy (GFE).
The minimizer of the GFE is subsequently approximated by the minimizer of the
Bethe free energy (BFE) and the rates in \eqref{eq:nui_tree} are shown to be a minimizer of the BFE. 
When the conflict graph is acyclic the BFE and GFE coincide, as such 
the result in Theorem \ref{TH:TREE} also follows from \cite[Section III.C]{yun1}. 

2) When $\theta_1 = \ldots = \theta_n = \gamma < 1/2$,  $\nu_i^{tree}(\vec \theta)$ simplifies to
\[ \nu_i^{tree}(\vec \theta) = \gamma \frac{(1-\gamma)^{|\mathcal{N}_i|-1}}
{(1-2\gamma)^{|\mathcal{N}_i|}},\]
which is a generalization of \cite[p383]{kelly2}, where a system is considered that has
an acyclic conflict graph where all the non-leaf nodes have $r$ neighbors.

3) The condition $\hat T < 1$ is clearly a necessary condition for $\vec \theta$ to be achievable.
It is also sufficient as $(\nu_1^{tree}(\vec \theta),\ldots,$ $\nu_n^{tree}(\vec \theta))$ is a well-defined back-off
vector that achieves these throughputs. In other words the set of achievable throughput vectors $\Gamma$, defined in
\eqref{eq:Gamma}, is given by
\[\Gamma = \{(\theta_1,\ldots,\theta_n) | \theta_i + \theta_j < 1, \mbox{ for } (i,j) \in E \}.\]

\subsection{Line networks}\label{sec:line}
The aim of this section is to show that in a line network consisting of $n$ nodes with an interference range of $\beta$,
any given
achievable target throughput vector $\vec \theta = (\theta_1,\ldots,\theta_n)$ is achieved if and only if the back-off rate
of node $i$ is set equal to 
\begin{align}\label{eq:nui_line}
\nu_i^{line}(\vec \theta) = \theta_i \
\frac{ \displaystyle \prod_{j=\max(i,\beta+1)}^{\min(i+\beta,n)-1} (1-(\theta_{j-\beta+1}+\ldots+\theta_j))}
{\displaystyle \prod_{j=\max(i,\beta+1)}^{\min(i+\beta,n)} (1-(\theta_{j-\beta}+\ldots+\theta_j))},
\end{align}
for $i=1,\ldots,n$.

The rate $\nu_i^{line}(\vec \theta)$ is fully determined by $\theta_{i-\beta},
\ldots,\theta_{i+\beta}$, that is, by the target throughput of its interfering nodes (as in the acyclic case).
Further note that $\beta = 1$ is the only case for which the corresponding conflict graph is acyclic as for $\beta > 1$ 
nodes $1$ to $\beta+1$ form a clique in the conflict graph. 
A direct proof of the next theorem is presented in Appendix \ref{apx:proof_line}:

\begin{theorem}\label{TH:LINE} 
Let $\vec \theta = (\theta_1,\ldots,\theta_n)$ be a positive vector with $T=\max_{i=1}^{n-\beta} (\theta_i+\ldots+\theta_{i+\beta}) < 1$.
The throughput of node $i$, for $i=1,\ldots,n$, in a line network with interference range $\beta$ 
matches $\theta_i$ if and only if the back-off rates are set according to \eqref{eq:nui_line}.
\end{theorem}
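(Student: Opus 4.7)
The plan is to write down the stationary distribution explicitly using a factorization along the clique chain of the line network, and then identify the induced back-off rates with \eqref{eq:nui_line}.

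First I would note that the maximal cliques of the conflict graph are $C_j = \{j-\beta, j-\beta+1, \ldots, j\}$ for $j = \beta+1, \ldots, n$, with separators $S_j = C_j \cap C_{j+1} = \{j-\beta+1, \ldots, j\}$. Writing $P_j = 1-\sum_{k \in C_j}\theta_k$ and $Q_j = 1-\sum_{k \in S_j}\theta_k$, I would introduce local clique and separator masses
\[
\mu_j(z_{C_j}) = P_j^{1-\sum_{k \in C_j} z_k} \prod_{k \in C_j} \theta_k^{z_k}, \qquad \sigma_j(z_{S_j}) = Q_j^{1-\sum_{k \in S_j} z_k} \prod_{k \in S_j} \theta_k^{z_k},
\]
which, because $C_j$ and $S_j$ are cliques (and hence $\sum_{k \in C_j} z_k, \sum_{k \in S_j} z_k \in \{0,1\}$ for $\vec z \in \Omega$), reduce to the natural marginals $\mu_j(z_{C_j}=0) = P_j$, $\mu_j(z_{C_j}=e_k) = \theta_k$ and their separator analogues, consistent with the prescribed throughputs.

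Next I would introduce the candidate
\[
\tilde\pi(\vec z) = \frac{\prod_{j=\beta+1}^{n} \mu_j(z_{C_j})}{\prod_{j=\beta+1}^{n-1} \sigma_j(z_{S_j})}
\]
and prove by induction on $n$ that $\tilde\pi$ is a probability distribution on $\Omega$ whose marginal on each clique $C_j$ equals $\mu_j$, so that in particular $\tilde\pi(z_i = 1) = \theta_i$. The base case $n=\beta+1$ is immediate since there $\tilde\pi = \mu_{\beta+1}$. The inductive step turns on the marginal-consistency identity $\sum_{z_n} \mu_n(z_{C_n}) = \sigma_{n-1}(z_{S_{n-1}})$, which follows from the algebraic identity $P_n + \theta_n = Q_{n-1}$ and allows the new clique $C_n$ to be attached without disturbing existing marginals while pinning the marginal at node $n$ to $\theta_n$.

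Then I would rewrite $\tilde\pi$ in CSMA product form. Collecting all factors that involve a fixed node $i$ across the cliques $C_j \ni i$ (which are indexed by $j \in [\max(i,\beta+1), \min(i+\beta,n)]$) and the separators $S_j \ni i$ (indexed by $j \in [\max(i,\beta+1), \min(i+\beta,n)-1]$), the count of $\theta_i$ factors from numerator minus denominator telescopes to a single $\theta_i^{z_i}$, and the remaining $P_j, Q_j$ factors assemble exactly into the ratio in \eqref{eq:nui_line}, giving $\tilde\pi(\vec z) = (\prod_j P_j/\prod_j Q_j) \prod_i (\nu_i^{line}(\vec\theta))^{z_i}$. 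The uniqueness of the back-off vector that achieves a given achievable throughput vector \cite{vandeven5} then yields the ``if and only if''.

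I expect the main obstacle to be step three, namely proving that the junction-tree-style ratio is actually a probability distribution with the prescribed node marginals; this hinges on the identity $P_n + \theta_n = Q_{n-1}$ propagated along the induction, and on treating the boundary cliques $C_{\beta+1}$ and $C_n$ carefully (so that empty products are interpreted as $1$ and boundary marginals are correctly telescoped). Once this is in hand, the bookkeeping of step four that extracts \eqref{eq:nui_line} from $\tilde\pi$ is essentially routine algebra.
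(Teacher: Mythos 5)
Your proposal is correct, but it proves the theorem by a genuinely different route than the paper. The paper's proof (Appendix B) works at the level of partition functions: it establishes Lemma~\ref{lem:Zline}, computing $Z^{line}_{1:i}$, $Z^{line}_{k:n}$ and $Z^{line}_n$ by induction via the recursion $Z^{line}_{1:i}=Z^{line}_{1:i-1}+\nu_i^{line}Z^{line}_{1:i-(\beta+1)}$ (conditioning on whether the last node is active), and then writes the throughput of node $i$ as $Z^{line}_{1:i-(\beta+1)}\,\nu_i^{line}\,Z^{line}_{i+\beta+1:n}/Z^{line}_n$ and telescopes. You instead construct the stationary measure directly as a junction-tree factorization $\prod_j \mu_j/\prod_j \sigma_j$ over the maximal cliques $C_j=\{j-\beta,\ldots,j\}$ and separators $S_j$, verify by marginal-consistency induction (the identity $P_n+\theta_n=Q_{n-1}$, and its mirror $P_{n-1}+\theta_{n-1-\beta}=Q_{n-1}$ needed on the $C_{n-1}$ side) that this is a probability measure on $\Omega$ with clique marginals $\mu_j$, and then read off the product form $\frac{1}{Z}\prod_i\nu_i^{z_i}$ using the fact that each clique and separator contains at most one active node, so the $P_j,Q_j$ factors attach multiplicatively to individual active nodes. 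Both arguments close the ``only if'' direction the same way, via the uniqueness result of \cite{vandeven5}. Your approach buys a more transparent explanation of where \eqref{eq:nui_line} and the normalizing constant $Z_n=\prod_j Q_j/\prod_j P_j$ come from, and it generalizes verbatim to any chordal conflict graph with a clique tree (essentially yielding Theorem~\ref{th:achieve_chordal} directly, which the paper instead proves by a separate partition-function induction in Lemmas~\ref{lem:Zchordal} and~\ref{lem:Zchordal2}); the paper's route is more elementary and self-contained, avoiding any appeal to decomposable-model machinery. The only points you should make fully explicit in a write-up are the positivity of $\tilde\pi$ on all of $\Omega$ under $T<1$ and the degenerate case $n\le\beta$ (a single clique), but these are minor.
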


\noindent {\bf Remarks.} 

1) When $\theta_1 = \ldots = \theta_n = \gamma < 1/(\beta+1)$,  $\nu_i^{line}(\vec \theta)$ simplifies to
\[ \nu_i^{line}(\vec \theta) = \gamma \frac{(1-\gamma \beta)^{h_i-1}}
{(1-\gamma (\beta+1))^{h_i}},\] 
with $h_i = \min(i+\beta,n)-\max(i,\beta+1)+1$. This result is equivalent to 
\cite[Proposition 3]{vandeven5} which is a restatement of the main result in \cite{vandeven1}.

2) The condition $T < 1$ is a necessary and sufficient condition for the vector $\vec \theta$ to be an achievable
throughput vector, that is, $\Gamma$ defined in \eqref{eq:Gamma} can be written as
\[\Gamma =\{(\theta_1,\ldots,\theta_n)| \max_{i=1}^{n-\beta} (\theta_i + \ldots + \theta_{i+\beta}) < 1\}.\]

3) While Theorem \ref{TH:LINE} applies only to line networks, it can be readily used to obtain 
results for networks obtained by replacing node $i$ by a clique consisting of $k_i$ nodes, for $i=1,\ldots,n$, where 
we  assume that the nodes part of clique $i$ interfere with all the nodes part of the next and previous $\beta$
cliques. Indeed, for the nodes part of clique $i$ it makes no difference whether the
other cliques consist of a single or multiple nodes as long as the sum of their back-off rates remains the same.
Hence, if we replace node $i$ by $k_i$ nodes, then all {\it nodes} achieve a 
throughput of $\gamma < 1/K$, with $K=\max_{i=1}^{n-\beta} (k_i+\ldots+k_{i+\beta})$, if and only if
the back-off rate $\nu_i^{cline}$ of a node part of clique $i$ is given by 
\[ \nu_i^{cline} = \gamma
\frac{\prod_{j=\max(i,\beta+1)}^{\min(i+\beta,n)-1} (1-\gamma(k_{j-\beta+1}+\ldots+k_j))}
{\prod_{j=\max(i,\beta+1)}^{\min(i+\beta,n)} (1-\gamma(k_{j-\beta}+\ldots+k_j))},\]
due to Theorem \ref{TH:LINE}.

\subsection{Inhomogeneous line networks}
The line networks considered so far can be generalized by relaxing the requirement that each node
interferes with the previous (and next) $\beta$ nodes. More specifically, let $\vec \beta = (\beta_1,\ldots,\beta_{n+1})$ 
be a vector with $\beta_1 =\beta_{n+1}= 0$, $\beta_{i} > 0$ integer and $\beta_{i+1} \leq \beta_i+1$, for $i=2,\ldots,n$.
Define the conflict graph $G(\vec \beta) = (V(\vec \beta),E(\vec \beta))$ such that
\begin{align*}
V(\vec \beta) &= \{1,\ldots,n\},\\
E(\vec \beta) &= \{ (i,j) | i=1,\ldots,n; j=i-\beta_i,\ldots,i-1\}.
\end{align*}
In other words, the (undirected) conflict graph is connected and node $i$ interferes with the previous $\beta_i$ nodes (and potentially
with some of the next nodes). The requirement that $\beta_{i+1} \leq \beta_i+1$ indicates that node $i+1$ cannot interfere
with some node $j<i$ if node $i$ did not interfere with node $j$. 
Hence, this requirement guarantees that the nodes $\{i-\beta_i, \ldots, i\}$ form a clique of size $\beta_i+1$ in $G(\vec \beta)$.
Remark that the line networks considered in the previous section correspond to the case 
\[\vec \beta = (0,1,\ldots,\beta-1,\underbrace{\beta, \ldots, \beta}_{n-\beta},0).\]
Based on the vector $\vec \beta$ we construct an ordered set of $m = |\{ i | \beta_{i} \geq \beta_{i+1}\}|$ cliques
denoted as $K_1,\ldots,K_m$.
These $m$ cliques are the maximal cliques of $G(\vec \beta)$ in the sense that for any $K_j$ there is no
larger clique in $G(\vec \beta)$ that contains $K_j$. Indeed, if $\beta_{i} \geq \beta_{i+1}$, then
the nodes $\{i-\beta_i,\ldots,i\}$ form a maximal clique. We order these $m$ maximal cliques such that
 $\min_{s \in K_i} s < \min_{s\in K_{i+1}} s$, for $i=1,\ldots,m-1$.

Consider the example in Figure \ref{fig:iline} where $n=9$ and $\vec \beta = (0,1,1,2,1,2,3,2,2,0)$.
In this example the ordered list of $m=5$ maximal cliques in $G(\vec \beta)$  is
given by $K_1 = \{1,2\},$ $K_2 = \{2,3,4\},$ $K_3 = \{4,5,6,7\},$ $K_4 = \{6,7,8\},$ and $K_5 = \{7,8,9\}$.
Finally, let $f(i)$ and $l(i)$ denote the index of the first and last maximal clique in which node $i$ appears, respectively.
In the example above we have $f(2) = 1, l(2)=2, f(5)=l(5)=3$, $f(7)=3$ and $l(7)=5$. 

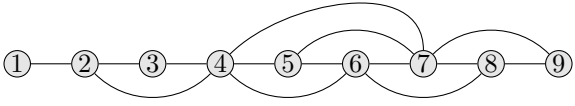
\begin{figure}[t]
\begin{center}
\begin{tikzpicture}[scale=0.9]
\tikzstyle{every node}=[circle, draw, fill=black!10,
                        inner sep=0pt, minimum width=10pt]

\node (1) at (1,0) []{$1$};
\node (2) at (2,0) []{$2$};
\node (3) at (3,0) []{$3$};
\node (4) at (4,0) []{$4$};
\node (5) at (5,0) []{$5$};
\node (6) at (6,0) []{$6$};
\node (7) at (7,0) []{$7$};
\node (8) at (8,0) []{$8$};
\node (9) at (9,0) []{$9$};

\draw [] (1) -- (2);
\draw [] (2) -- (3);
\draw [] (3) -- (4);
\draw [] (4) -- (5);
\draw [] (5) -- (6);
\draw [] (6) -- (7);
\draw [] (7) -- (8);
\draw [] (8) -- (9);
\draw (2) to[out=-45,in=-135] (4);
\draw (4) to[out=-45,in=-135] (6);
\draw (6) to[out=-45,in=-135] (8);
\draw (5) to[out=45,in=135] (7);
\draw (7) to[out=45,in=135] (9);
\draw (4) to[out=45,in=90] (7);

\end{tikzpicture}
\end{center}
\caption{Conflict graph $G(\vec \beta)$ of inhomogeneous line network with $\vec \beta = (0,1,1,2,1,2,3,2,2,0)$.}
\label{fig:iline}
\end{figure}

We are now in a position to define the rates $\nu_i^{iline}(\vec \theta)$ for any achievable 
throughput vector $\vec \theta = (\theta_1,\ldots,\theta_n)$ as
follows:
\begin{align}\label{eq:nui_iline}
\nu_i^{iline}(\vec \theta) = \theta_i \
\frac{ \displaystyle \prod_{j=f(i)}^{l(i)-1} \left(1-\sum_{s\in K_j \cap K_{j+1}} \theta_s\right)}
{\displaystyle \prod_{j=f(i)}^{l(i)}  \left(1-\sum_{s\in K_j} \theta_s\right)},
\end{align}
for $i=1,\ldots,n$. In order to set its back-off rate equal to $\nu_i^{iline}(\vec \theta)$ node $i$ needs to
know the target throughput of its interfering nodes as well as the exact composition of the maximal cliques it
belongs to.

\begin{theorem}\label{th:achieve_iline} 
Consider an inhomogeneous line network that is characterized by the vector $\vec \beta$
and let $\vec \theta = (\theta_1,\ldots,\theta_n)$ be a positive vector with $T=\max_{j=1}^{m} \sum_{s\in K_j} \theta_s < 1$.
The throughput of node $i$, for $i=1,\ldots,n$, in a network with conflict graph $G(\vec \beta)$
matches $\theta_i$ if and only if the back-off rates are set according to \eqref{eq:nui_iline}.
\end{theorem}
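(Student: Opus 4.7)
The plan is to construct the stationary distribution of the CSMA dynamics with back-off rates $\nu_i^{iline}(\vec\theta)$ directly in a junction-tree factored form along the clique path $K_1,\ldots,K_m$, and then read off the node marginals to confirm they equal $\theta_i$. Combined with the uniqueness of the back-off vector achieving any $\vec\theta \in \Gamma$ \cite{vandeven5}, this establishes both directions, so the real work is verifying that the prescribed rates produce the right throughputs.

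For every (sub)clique $H$ among $K_1,\ldots,K_m$ and $S_1,\ldots,S_{m-1}$ (with $S_j := K_j\cap K_{j+1}$), I would first define the local distribution $p_H$ on the independent sets of $H$ by $p_H(\{s\}) = \theta_s$ for $s\in H$ and $p_H(\emptyset) = 1 - \sum_{s\in H}\theta_s$; each is a genuine probability measure since $T<1$. I then introduce the candidate
\[
\tilde\pi(\vec z) \;=\; \frac{\prod_{j=1}^m p_{K_j}(\vec z_{K_j})}{\prod_{j=1}^{m-1} p_{S_j}(\vec z_{S_j})}, \qquad \vec z\in\Omega,
\]
and the goal becomes to identify $\tilde\pi$ with the stationary distribution \eqref{eq:stst} at the rates $\nu_i^{iline}(\vec\theta)$.

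The first, and main, step is to show that $\tilde\pi$ is a genuine probability measure on $\Omega$ with marginal $p_{K_j}$ on each maximal clique $K_j$. The construction of $K_1,\ldots,K_m$ out of $\vec\beta$ makes it a clique path satisfying the running intersection property, and the local distributions are pairwise consistent: for any subclique $H\subseteq K_j$, the $H$-marginal of $p_{K_j}$ equals $p_H$, because $p_{K_j}|_H(\{s\}) = \theta_s$ and $p_{K_j}|_H(\emptyset) = (1-\sum_{s\in K_j}\theta_s) + \sum_{s'\in K_j\setminus H}\theta_{s'} = 1 - \sum_{s\in H}\theta_s$. Standard decomposable-graph reasoning then gives that $\tilde\pi$ is normalized on $\Omega$ and that its $K_j$-marginal is $p_{K_j}$; in particular $\tilde\pi(z_i = 1) = p_{K_{f(i)}}(z_i = 1) = \theta_i$, so the candidate already has the correct throughputs.

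It remains to identify $\tilde\pi$ algebraically with a CSMA Gibbs measure and read off its back-off rates. Writing $p_H(\vec z_H) = \prod_{s\in H}\theta_s^{z_s}(1-\sum_{s\in H}\theta_s)^{1-\sum_{s\in H}z_s}$ and using that each node $s$ lies in $l(s)-f(s)+1$ of the $K_j$ and in $l(s)-f(s)$ of the $S_j$, the $\theta_s^{z_s}$ contributions telescope to a single $\theta_s^{z_s}$, and the factors of the form $(1-\sum_{r\in H}\theta_r)^{1-\sum_{r\in H}z_r}$ regroup into a $\vec z$-independent constant times $\prod_s\bigl[\prod_{j=f(s)}^{l(s)-1}(1-\sum_{r\in S_j}\theta_r)/\prod_{j=f(s)}^{l(s)}(1-\sum_{r\in K_j}\theta_r)\bigr]^{z_s}$. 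This is precisely $\prod_s\nu_s^{iline}(\vec\theta)^{z_s}$, so $\tilde\pi(\vec z) \propto \prod_s[\nu_s^{iline}(\vec\theta)]^{z_s}$ and $\tilde\pi$ coincides with \eqref{eq:stst} at the claimed rates. The hard part will be the first step—establishing the junction-tree factorization on $\Omega$—since the remaining algebra is precisely the telescoping designed into the formula for $\nu_s^{iline}$.
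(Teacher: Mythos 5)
Your argument is correct, but it takes a genuinely different route from the paper. The paper does not prove Theorem~\ref{th:achieve_iline} directly: it observes that $K_1,\ldots,K_m$ with edges $(j,j+1)$ form a clique tree of the (chordal) graph $G(\vec\beta)$ and derives the theorem as a corollary of Theorem~\ref{th:achieve_chordal}, whose proof computes sub-network partition functions $Z_{V_l}(\vec\theta)$ by induction on the clique tree (Lemmas~\ref{lem:Zchordal} and~\ref{lem:Zchordal2}) and then reads off each throughput as $\nu_i\prod_w Z_{V_w}/Z_n$. You instead work with the joint distribution: you posit the Markov-combination $\tilde\pi=\prod_j p_{K_j}/\prod_j p_{S_j}$ of the clique marginals, invoke the junction-tree factorization theorem for decomposable models to get normalization and the marginals $\tilde\pi(z_i=1)=\theta_i$, and then telescope the clique and separator factors to identify $\tilde\pi$ with the product form \eqref{eq:stst} at the rates \eqref{eq:nui_iline}; uniqueness of the back-off vector gives the converse, exactly as in the paper. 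Your route is more conceptual --- it explains where the formula comes from and extends verbatim to arbitrary clique trees of arbitrary chordal graphs, so it is really an alternative proof of Theorem~\ref{th:achieve_chordal} --- and your $\vec z$-independent prefactor $\prod_j(1-\sum_{s\in K_j}\theta_s)/\prod_j(1-\sum_{s\in S_j}\theta_s)$ is precisely $1/Z_n(\vec\theta)$ from Lemma~\ref{lem:Zchordal2}, so the two computations are dual to one another. The paper's induction buys self-containedness (it proves the factorization from scratch rather than citing graphical-model machinery) plus explicit sub-network normalizing constants reused elsewhere. Two points in your write-up are asserted rather than proved and deserve a sentence each if expanded: (i) the running intersection property of the path $K_1,\ldots,K_m$ and the fact that the maximal cliques containing a given node $s$ are exactly the consecutive block $K_{f(s)},\ldots,K_{l(s)}$ (this underlies both the junction-tree step and your count that $s$ lies in $l(s)-f(s)+1$ cliques and $l(s)-f(s)$ separators); and (ii) that the numerator of $\tilde\pi$ vanishes exactly off $\Omega$ while every separator factor is positive wherever the numerator is, so the quotient is well defined with support $\Omega$. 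Both are immediate for interval graphs, so neither is a gap.
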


While this results can be proven by induction similar to the proof in Appendix B, we skip this proof
as the result also follows from Theorem \ref{th:achieve_chordal} presented in the next section 
by noting that any generalized line network has a chordal conflict graph $G$ and 
$T=(\{K_1,$ $\ldots,K_m\},\mathcal{E})$ is a 
clique tree of $G$ when $\mathcal{E}=\{(j,j+1)|j=1,\ldots,m-1\}$.


\section{Chordal conflict graphs}\label{sec:chordal}

In this section we present the main result of the paper that unifies Theorem \ref{TH:TREE} and \ref{TH:LINE} 
as we present explicit formulas for
the back-off rates needed to achieve any achievable target throughput vector when the conflict graph $G$ is chordal.
A chordal graph $G=(V(G),E(G))$ is one in which all cycles consisting of more than $3$ nodes have a {\it chord}. A chord of a cycle 
is an edge joining two nonconsecutive nodes of the cycle. In other words, a chord is an edge that is not part of the cycle, but that 
connects two nodes belonging to the cycle. A graph is chordal if and only if it has a perfect elimination ordering \cite{blair1}, which is an
ordering of the nodes of the graph such that, for each $v \in V(G)$, $v$ and the neighbors of $v$ that occur after $v$ in the order form a clique. 
We will make use of this perfect elimination order further on, but present our main result using so-called clique trees first.

Let $\mathcal{K}_G = \{K_1,\ldots,K_m\}$ be the set of maximal cliques of $G$. 
A clique tree $T=(\mathcal{K}_G,\mathcal{E})$ 
is a tree in which the nodes correspond
to the maximal cliques and the edges are such that for any two maximal cliques $K$ and $K'$ the elements in $K \cap K'$
are part of any maximal clique on the unique path from $K$ to $K'$ in $T$.
The latter can be restated by demanding that the subgraph of $T$ induced by the maximal cliques that contain the node $v$
is a subtree of $T$ for any $v \in V$. 
 
An example of a chordal graph $G$ and possible clique tree $T$ is given in Figure \ref{fig:chordal_example}.
In this example several clique trees exist (e.g., replacing the edge between $K_4$ and $K_5$ by an edge
between $K_5$ and $K_6$ also produces a clique tree). 
One can show that a graph $G$ is chordal if and only if it has at least one clique tree (see Theorem 3.1 in \cite{blair1}).
A clique tree $T$ of a chordal graph can be constructed in linear time (by first computing a perfect elimination ordering, see
Section \ref{sec:algo}) and contains at most $V(G)$ nodes\footnote{In general a graph with $n$ nodes can have up to $3^{n/3}$
maximal cliques (and this upper bound is achieved by taking the complementary graph of the union of $n/3$ copies of 
the complete graph of size $3$) \cite{wood1}.}.

\begin{figure}[t]
\begin{center}
\begin{tikzpicture}[scale=0.9]
\tikzstyle{every node}=[circle, draw, fill=black!10,
                        inner sep=0pt, minimum width=10pt]

\node (1) at (1,0) []{$1$};
\node (2) at (2,0) []{$2$};
\node (3) at (3,0) []{$3$};
\node (4) at (4,0) []{$4$};
\node (5) at (5,0.5) []{$5$};
\node (6) at (4,1) []{$6$};
\node (7) at (3,1) []{$7$};
\node (8) at (2,1) []{$8$};
\node (9) at (1,1) []{$9$};
\node (10) at (2,2) []{$10$};
\node (11) at (3,2) []{$11$};

\draw [] (1) -- (2);
\draw [] (2) -- (3);
\draw [] (2) -- (7);
\draw [] (2) -- (8);
\draw [] (3) -- (4);
\draw [] (3) to[out=-45,in=-90] (5);
\draw [] (3) -- (6);
\draw [] (3) -- (7);
\draw [] (3) -- (8);
\draw [] (4) -- (5);
\draw [] (4) -- (6);
\draw [] (4) -- (7);
\draw [] (5) -- (6);
\draw [] (5) to[out=90,in=45] (7);
\draw [] (6) -- (7);
\draw [] (7) -- (8);
\draw [] (7) -- (10);
\draw [] (7) -- (11);
\draw [] (8) -- (9);
\draw [] (8) -- (10);
\draw [] (8) -- (11);

\node (k1) at (8,2.5) []{$K_1$};
\node (k2) at (7,1.5) []{$K_2$};
\node (k3) at (8,1.5) []{$K_3$};
\node (k4) at (8,0.5) []{$K_4$};
\node (k5) at (8,-0.5) []{$K_5$};
\node (k6) at (9,1.5) []{$K_6$};

\draw [] (k1) -- (k3);
\draw [] (k2) -- (k3);
\draw [] (k3) -- (k6);
\draw [] (k4) -- (k5);
\draw [] (k3) -- (k4);

\end{tikzpicture}
\end{center}
\caption{Example of a chordal graph $G$ with $n=11$ nodes (left) and one of its clique trees (right).
This graph contains $6$ maximal cliques $K_1=\{1,2\}, K_2=\{3,4,5,6,7\}, K_3=\{2,3,7,8\}, K_4=\{7,8,10\}, K_5=\{8,9\}$ and
$K_6=\{7,8,11\}$.}
\label{fig:chordal_example}
\end{figure}
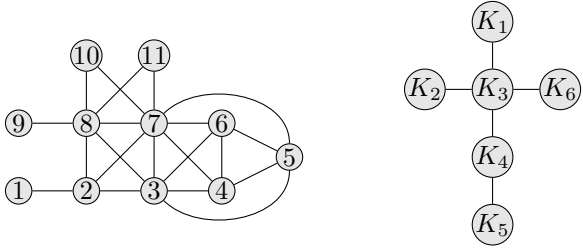

\subsection{Clique tree representation}

In this section we prove that if $\vec \theta$ is an achievable throughput vector for a network with a chordal
conflict graph $G$, we achieve $\vec \theta$ if and only if the back-off rates $\nu_i(\vec \theta)$ are given by 
\begin{align}\label{eq:nui_chordal2}
 \nu_i(\vec \theta) = \theta_i \
\frac{ \displaystyle \prod_{(K,K')\in \mathcal{E}, i \in K\cap K'} \left(1-\sum_{s \in K\cap K'} \theta_s\right)}
{ \displaystyle \prod_{K \in \mathcal{K}_G, i \in K}\left(1-\sum_{s \in K} \theta_s\right)},
\end{align}
for $i=1,\ldots,n$, where $T=(\mathcal{K}_G,\mathcal{E})$ is a clique tree of $G$.

For instance, for the example in Figure \ref{fig:chordal_example} we have
\begin{align*}
 \nu_2(\vec \theta) = \theta_2
\frac{1-\theta_2}{(1-\theta_1-\theta_2)(1-\theta_2-\theta_3-\theta_7-\theta_8)},
\end{align*}
as node $2$ belongs to $K_1=\{1,2\}$ and $K_3=\{2,3,7,8\}$ and $K_1 \cap K_3 = \{2\}$.\\

\noindent {\bf Remarks.} 

1) It may appear that the above rates depend on the clique tree $T$ that is used, as a chordal
graph does not necessarily have a unique clique tree. However, if $\mathcal{E}$ is the edge set of
any clique tree $T$ of $G$, then Theorem 4.2 of \cite{blair1} indicates that the {\it multiset} 
$\{K \cap K'| (K,K') \in \mathcal{E}\}$ is the same for every clique tree. As a result, we obtain the
same rate $\nu_i(\vec \theta)$ regardless of the clique tree used.\\

2) As for the acyclic and generalized line networks, the rate $\nu_i(\vec \theta)$ only depends on
the target throughputs $\theta_i \cup \{\theta_s | s \in \mathcal{N}_i\}$. In  other words the throughputs of
the non-interfering nodes have no impact on  $\nu_i(\vec \theta)$.\\

\begin{lemma}\label{lem:Zchordal}
Consider a network that has a chordal conflict graph $G$. 
Let $T=(\mathcal{K}_G,\mathcal{E})$ be a clique tree of $G$. Let $(K_l,K_m) \in \mathcal{E}$ and let
$(\mathcal{K}_l,\mathcal{E}_l)$ and $(\mathcal{K}_m,\mathcal{E}_m)$ be the two subtrees of $T$ obtained
after removing the edge $(K_l,K_m)$. Define $S = K_l \cap K_m$ and the sets $V_l$ and $V_m$ as
\[V_l = \left(\bigcup_{K \in \mathcal{K}_l} K \right) \setminus S,
\ \ \mbox{ and } \ \ V_m = \left(\bigcup_{K \in \mathcal{K}_m} K \right) \setminus S.\]
Let $Z_{V_l}(\vec \theta)$ be the normalizing constant for the network consisting of the
nodes in $V_l$ only  given that the back-off rate of node $i$ is set according to \eqref{eq:nui_chordal2} for $i=1,\ldots,n$. 
Then we have
\begin{align}\label{eq:chorZ}
Z_{V_l}(\vec \theta) &= \frac{\prod_{(K,K')\in \mathcal{E}_l}g_{K\cap K'}(\vec \theta) }
{\prod_{K \in \mathcal{K}_l} g_K(\vec \theta) } g_{K_l\cap K_m}(\vec \theta),
\end{align}
where $g_X(\vec \theta) = 1-\sum_{s \in X} \theta_s$.
\end{lemma}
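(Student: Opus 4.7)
The plan is induction on $|\mathcal{K}_l|$, peeling off one leaf of the subtree $(\mathcal{K}_l,\mathcal{E}_l)$ at a time. In the base case $\mathcal{K}_l=\{K_l\}$, the set $V_l = K_l\setminus S$ is contained in the clique $K_l$; the subtree property forces each $v\in V_l$ to belong only to $K_l$, so $\nu_v=\theta_v/g_{K_l}(\vec\theta)$. Combined with the identity $\sum_{v\in V_l}\theta_v = g_S(\vec\theta)-g_{K_l}(\vec\theta)$, this yields $Z_{V_l}(\vec\theta)=g_S(\vec\theta)/g_{K_l}(\vec\theta)$, matching \eqref{eq:chorZ}.

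For the inductive step, pick a leaf $K_{l'}$ of $(\mathcal{K}_l,\mathcal{E}_l)$ with $K_{l'}\neq K_l$, which exists because a tree with at least two nodes has at least two leaves. Let $K_{l''}$ be its unique $\mathcal{K}_l$-neighbor; since $K_{l'}\neq K_l$, $K_{l''}$ is also the unique $T$-neighbor of $K_{l'}$. Set $A=K_{l'}\setminus K_{l''}$, $S_\ast = K_{l'}\cap K_{l''}$, and $S'=S_\ast\setminus S$. The subtree property implies $A\subseteq V_l$, that each $i\in A$ belongs only to the clique $K_{l'}$, and that the only neighbors of $A$-nodes inside $V_l$ lie in $A\cup S'$. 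Conditioning on $z_A$ (since $A\subseteq K_{l'}$ is a clique, $z_A$ is either zero or a single unit vector $e_i$), and noting that an active $i\in A$ forces $z_{S'}=0$, one obtains
\[
Z_{V_l}(\vec\theta) = Z_{\tilde V_l}(\vec\theta) + \Big(\sum_{i\in A}\nu_i\Big)\, Z_{\tilde V_l,\,z_{S'}=0}(\vec\theta),
\]
with $\tilde V_l = V_l\setminus A$. Since $i\in A$ lies only in $K_{l'}$ we have $\nu_i=\theta_i/g_{K_{l'}}$, and the telescoping identity $\sum_{i\in A}\theta_i = g_{S_\ast}-g_{K_{l'}}$ gives $\sum_{i\in A}\nu_i = g_{S_\ast}/g_{K_{l'}} - 1$.

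The crux is to apply the induction hypothesis to the reduced chordal graph $G^{\ast}=G[V\setminus A]$, whose clique tree $T^{\ast}$ is obtained from $T$ by deleting the leaf $K_{l'}$. Let $\nu^{\ast}$ be the back-off rates produced by \eqref{eq:nui_chordal2} on $(G^{\ast},T^{\ast})$, and write $Z^{\ast}_X$ for the associated normalizing constants. A direct comparison of \eqref{eq:nui_chordal2} for $T$ versus $T^{\ast}$ shows that $\nu_v=\nu^{\ast}_v$ for $v\in\tilde V_l\setminus S'$ and $\nu_v/\nu^{\ast}_v = g_{S_\ast}/g_{K_{l'}}$ for $v\in S'$. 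Because $S'\subseteq K_{l''}$ is itself a clique, at most one of its nodes is ever active; exploiting this clique structure, one can rewrite
\[
Z_{\tilde V_l} = (1 - g_{S_\ast}/g_{K_{l'}})\, Z^{\ast}_{\tilde V_l,\,z_{S'}=0} + (g_{S_\ast}/g_{K_{l'}})\, Z^{\ast}_{\tilde V_l}\quad\text{and}\quad Z_{\tilde V_l,\,z_{S'}=0} = Z^{\ast}_{\tilde V_l,\,z_{S'}=0}.
\]
Substituting into the formula for $Z_{V_l}$ makes the $Z^{\ast}_{\tilde V_l,\,z_{S'}=0}$ contributions cancel, leaving $Z_{V_l} = (g_{S_\ast}/g_{K_{l'}})\, Z^{\ast}_{\tilde V_l}$. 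The induction hypothesis applied to $(G^{\ast},T^{\ast})$ at the edge $(K_l,K_m)\in\mathcal{E}^{\ast}$ then supplies the formula for $Z^{\ast}_{\tilde V_l}$ with products running over $\mathcal{K}_l\setminus\{K_{l'}\}$ and $\mathcal{E}_l\setminus\{(K_{l'},K_{l''})\}$; multiplying by the extra factor $g_{S_\ast}/g_{K_{l'}} = g_{K_{l'}\cap K_{l''}}/g_{K_{l'}}$ reinstates the full products over $\mathcal{E}_l$ and $\mathcal{K}_l$, yielding \eqref{eq:chorZ}.

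The main obstacle is precisely this rate-translation step: since \eqref{eq:nui_chordal2} depends on the ambient clique tree, the subnetwork on $\tilde V_l$ \emph{inside} $G$ does not literally reduce to the same subnetwork inside $G^{\ast}$. Carefully tracking where $\nu$ and $\nu^{\ast}$ differ (only on the separator $S'$), and using the clique structure of $S'$ so that at most one active node can register the discrepancy, is what produces the cancellation that closes the induction.
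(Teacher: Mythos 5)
Your proof is correct, but it runs the induction in the opposite direction from the paper's. The paper also inducts on $|\mathcal{K}_l|$, but it peels the subtree at the clique $K_l$ adjacent to the cut edge: it conditions on which node of $V_l\cap K_l$ (if any) is active, decomposes the remainder into the components $V_w$ associated with the edges $\mathcal{W}_i$ leaving the subtrees $T_i$, and applies the induction hypothesis to those components \emph{within the same graph $G$ and the same clique tree $T$} (only the distinguished edge changes). Because the ambient tree never changes, the rates \eqref{eq:nui_chordal2} never change either, so the paper needs no analogue of your rate-translation step; the price is the heavier bookkeeping of the sets $\mathcal{W}_i$ and the products of $Z_{V_w}$'s. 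You instead peel a leaf clique $K_{l'}$ at the far end of the subtree, delete its private nodes $A$, and pass to a strictly smaller chordal graph $G^\ast$ with a modified clique tree; this makes each inductive step combinatorially simple (one leaf at a time) but forces you to confront the tree-dependence of \eqref{eq:nui_chordal2}, which you correctly resolve by noting that $\nu_v$ and $\nu_v^\ast$ differ only on the clique $S'$ and only by the factor $g_{S_\ast}(\vec\theta)/g_{K_{l'}}(\vec\theta)$, so that the discrepancy cancels. I verified the key identities: $A$ consists exactly of the nodes lying in no maximal clique other than $K_{l'}$, the two displayed decompositions of $Z_{V_l}$ and $Z_{\tilde V_l}$ are valid, and the cancellation yields $Z_{V_l}=\bigl(g_{S_\ast}(\vec\theta)/g_{K_{l'}}(\vec\theta)\bigr)Z^\ast_{\tilde V_l}$, after which the induction hypothesis (quantified over all chordal graphs and clique trees, which your formulation implicitly requires and legitimately uses) closes the argument. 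It is worth noting that your leaf-removal identity is essentially the same mechanism as the paper's later theorem on adding a node whose neighborhood is a clique (compare your relation $Z_{V_l}=(g_{S_\ast}/g_{K_{l'}})Z^\ast_{\tilde V_l}$ with \eqref{eq:Zrel}) and underlies the perfect-elimination-ordering representation of Proposition~\ref{prop:peo}, so your route arguably unifies the lemma with those results, at the cost of a slightly more delicate inductive statement.
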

\begin{proof}
By Lemma 4 in \cite{blair1}, we have that $S$, $V_l$ and $V_m$ form a partition of $V(G)$.
Let $T$ be a clique tree, $i \in V(G)$ and $T_i=(V(T_i),E(T_i))$ the graph induced by the cliques that contain $i$.
Clearly, $T_i$ is a subtree of $T$ (otherwise $T$ is not a clique tree). Let $i \in V_l$, meaning there
is some $K \in \mathcal{K}_l$ such that $i \in K$. Assume that $i \in K'$ with $K' \in \mathcal{K}_m$,
which implies that $(K_l,K_m)$ must be on the path from $K$ to $K'$. However, this means that $i \in S$,
which cannot be the case by definition of $V_l$. As a result, 
we may conclude that $V(T_i) \subset \mathcal{K}_l$.

We now prove the result by induction on $|\mathcal{K}_l|$. If $|\mathcal{K}_l|=1$, meaning $\mathcal{K}_l=\{K_l\}$, 
the nodes in $V_l$ only belong to the maximal clique $K_l$. Thus,
\begin{align*}
 Z_{V_l}(\vec \theta) &= 1 + \sum_{i \in V_l} \nu_i(\vec \theta) = 1+ \sum_{i \in V_l} \frac{\theta_i}{1-\sum_{s \in K_l} \theta_s} \\
& =\frac{1- \sum_{s \in K_l \cap K_m} \theta_s}{1-\sum_{s \in K_l} \theta_s} = \frac{g_{K_l\cap K_m}(\vec \theta)}{g_{K_l}(\vec \theta)}, 
\end{align*} 
as $K_l$ is the union of the disjoint sets $V_l$ and $S=K_l \cap K_m$. 
For $|\mathcal{K}_l| > 1$, we introduce the following notations. Let $\mathcal{N}_K$ be the neighbors of
$K \in \mathcal{K}_G$ in $T$ and $\mathcal{W}_i$ be the subset of $\mathcal{E}$ holding the edges that
connect a maximal clique that contains $i$ with one in $\mathcal{K}_l$ that does not, that is,  
\[\mathcal{W}_i = \{(K,K') | (K,K') \in \mathcal{E},
K \in \mathcal{K}_l \setminus V(T_i), K' \in  V(T_i)\}.\] 
We now use the fact that at most one node can be active simultaneously in the set $V_l \cap K_l$ to argue that
\begin{align*} 
Z_{V_l}(\vec \theta) &= \sum_{i \in V_l \cap K_l} \nu_i(\vec \theta)\prod_{(K_{w},K_z) \in \mathcal{W}_i} Z_{V_{w}}(\vec \theta)\\
&+ \prod_{K_{j} \in \mathcal{N}_{K_l} \setminus \{K_m\}} Z_{V_{j}}(\vec \theta),
\end{align*} 
where $V_j$ and $V_w$ are defined as $V_l$ if we replace $(K_l,K_m)$ by $(K_j,K_l)$ and $(K_w,K_z)$, respectively. 
Note that $V_j \subset V_l$ as any $v\in K_l \cap K_m$ that is part of $\bigcup_{K \in \mathcal{K}_j} K$ must
be part of $K_j \cap K_l$ and similarly $V_w \subset V_l$. Further, the sets $V_j$ are pair-wise disjoint (as
are the sets $V_w$) allowing us to take the products of the normalizing constants. 

By induction and due to $V(T_i) \subset \mathcal{K}_l$, for $i \in V_l$, we have
\begin{align*}
\MoveEqLeft Z_{V_l}(\vec \theta) = \sum_{i \in V_l \cap K_l} \theta_i 
\frac{\prod_{(K,K')\in \mathcal{E}_l, i \in K\cap K'} g_{K\cap K'}(\vec \theta)}
{ \prod_{K \in \mathcal{K}_l, i \in K} g_K(\vec \theta)} \\
&\hspace*{-0.5cm} \times \frac{\prod_{(K,K')\in \mathcal{E}_l, i \not\in K\cap K'}g_{K\cap K'}(\vec \theta) }
{\prod_{K \in \mathcal{K}_l, i \not\in K} g_{K}(\vec \theta) }
+ \frac{\prod_{(K,K')\in \mathcal{E}_l} g_{K\cap K'}(\vec \theta)}
{ \prod_{K \in \mathcal{K}_l \setminus \{K_l\}} g_K(\vec \theta)}\\
&\hspace*{-0.5cm} = \frac{\prod_{(K,K')\in \mathcal{E}_l} g_{K\cap K'}(\vec \theta)}
{ \prod_{K \in \mathcal{K}_l } g_K(\vec \theta)} 
\underbrace{\left( \sum_{i \in V_l \cap K_l} \theta_i + g_{K_l}(\vec \theta)\right)}_{1-\sum_{s \in K_l \cap K_m} \theta_s},
\end{align*}
which completes the proof.
\end{proof}

For instance, for the example in Figure \ref{fig:chordal_example} if we let $(K_l,K_m) = (K_4,K_3)$, we find that
$\mathcal{K}_l = \{K_4,K_5\}$, $\mathcal{E}_l = \{(K_4,K_5)\}$, $V_l = \{9,10\}$ and the above lemma states that
\begin{align*}
Z_{V_l}(\vec \theta)  &= \frac{g_{K_4\cap K_5}(\vec \theta)}{g_{K_5}(\vec \theta)g_{K_4}(\vec \theta)} g_{K_4 \cap K_3}(\vec \theta)\\
&= \frac{(1-\theta_8)(1-\theta_7-\theta_8)}{(1-\theta_8-\theta_9)(1-\theta_7-\theta_8-\theta_{10})}\\
&= \left(1+\frac{\theta_9}{1-\theta_8-\theta_9}\right)\left(1+\frac{\theta_{10}}{1-\theta_7-\theta_8-\theta_{10}}\right) \\
&= (1+\nu_9(\vec \theta))(1+\nu_{10}(\vec \theta)),
\end{align*}
as expected.

\begin{lemma}\label{lem:Zchordal2}
Consider a network that has a chordal conflict graph $G$. 
Let $Z_n(\vec \theta)$ be the normalizing constant for the network
 given that the back-off rate of node $i$ is set according to \eqref{eq:nui_chordal2} for $i=1,\ldots,n$, then
\begin{align}\label{eq:chorZn}
Z_{n}(\vec \theta) &= \frac{\prod_{(K,K')\in \mathcal{E}}\left(1-\sum_{s \in  K \cap K'} \theta_s \right) }
{\prod_{K \in \mathcal{K}_G} \left(1-\sum_{s \in  K} \theta_s \right) }.
\end{align}
\end{lemma}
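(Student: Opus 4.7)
I would prove Lemma \ref{lem:Zchordal2} by induction on $m = |\mathcal{K}_G|$, using Lemma \ref{lem:Zchordal} as the essential tool. The base case $m = 1$ is immediate: $G$ is complete, every rate simplifies to $\nu_i(\vec\theta) = \theta_i/g_{K_1}(\vec\theta)$, and a direct summation yields $Z_n(\vec\theta) = 1/g_{K_1}(\vec\theta)$, matching (with empty numerator product) the claimed formula.

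For the inductive step $m \geq 2$, I would pick a leaf $K^*$ of the clique tree $T$ with unique neighbor $K'$; set $S = K^* \cap K'$, $V_{K^*} = K^* \setminus S$, and $V_m = V(G) \setminus K^*$; and let $G' = G|_{V(G) \setminus V_{K^*}}$. Then $G'$ is chordal with clique tree $T' = (\mathcal{K}_G \setminus \{K^*\}, \mathcal{E} \setminus \{(K^*, K')\})$ on $m-1$ maximal cliques, and because $K^*$ is a leaf of $T$ every $i \in V_{K^*}$ belongs only to $K^*$, so its $G$-neighbors are exactly $K^* \setminus \{i\}$. Decomposing $Z_n(\vec\theta)$ according to the state on the clique $V_{K^*}$ (which admits at most one active node) gives
\[Z_n(\vec\theta) = Z_{G'}(\vec\theta;\nu) + \left(\sum_{i \in V_{K^*}} \nu_i(\vec\theta)\right) Z_{V_m}(\vec\theta),\]
where $Z_{G'}(\vec\theta;\nu)$ denotes the normalizing constant of $G'$ evaluated with the $G$-rates $\nu_i$, and $Z_{V_m}(\vec\theta)$ is supplied directly by Lemma \ref{lem:Zchordal} applied with $(K_l,K_m)=(K^*,K')$.

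The main obstacle will be evaluating $Z_{G'}(\vec\theta;\nu)$, because the $G$-rates differ from the rates $\nu'_i$ that \eqref{eq:nui_chordal2} assigns to $G'$ via $T'$: a direct comparison yields $\nu_i = \nu'_i$ for $i \in V_m$ and $\nu_i = \nu'_i \cdot g_S(\vec\theta)/g_{K^*}(\vec\theta)$ for $i \in S$, since dropping $K^*$ removes the factor $g_{K^*}$ from the denominator of $\nu_i$ while dropping the edge $(K^*,K')$ removes $g_S$ from its numerator. Because $S \subseteq K^*$ is itself a clique, at most one $S$-node is active in any state of $G'$, so splitting $Z_{G'}(\vec\theta;\nu)$ according to whether an $S$-node is active gives $Z_{G'}(\vec\theta;\nu) = A + (g_S/g_{K^*})B$, where $A+B = Z_{G'}(\vec\theta;\nu')$ is delivered by the inductive hypothesis on $G'$ and $A = Z_{V_m}(\vec\theta)$ (the ``no $S$-node active'' contribution collapses to the $V_m$-only normalizing constant, which Lemma \ref{lem:Zchordal} supplies once more). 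Substituting these pieces, using $\sum_{i \in V_{K^*}} \nu_i(\vec\theta) = (\sum_{i \in V_{K^*}} \theta_i)/g_{K^*}(\vec\theta)$ together with the elementary identity $g_{K^*}(\vec\theta) + (1 - g_S(\vec\theta)) = 1 - \sum_{i \in V_{K^*}} \theta_i$, the expression telescopes into the claimed product formula. The only subtle aspect is the bookkeeping between the two rate vectors $\nu$ and $\nu'$.
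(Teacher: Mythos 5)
Your argument is correct, but it is organized differently from the paper's. The paper proves the lemma in one shot: it picks a leaf $K_m$ of the clique tree and decomposes $Z_n(\vec\theta)$ according to which node of the \emph{entire} leaf clique $K_m$ (separator nodes included) is active, evaluating every resulting piece directly via Lemma~\ref{lem:Zchordal} (the active-$i$ terms as products of $Z_{V_w}(\vec\theta)$ over the branches hanging off the subtree $T_i$, the all-silent term as $Z_{V_l}(\vec\theta)$ with $V_l=V(G)\setminus K_m$), and then collapses the sum using $\sum_{i\in K_m}\theta_i+g_{K_m}(\vec\theta)=1$; there is no induction on the number of maximal cliques and no auxiliary graph. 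You instead condition only on the \emph{private} nodes $K^*\setminus S$ of the leaf, which reduces the problem to the induced chordal graph $G'$ with one fewer maximal clique, and close the loop by induction on $m$ after relating the two rate vectors ($\nu_i=\nu'_i$ on $V_m$, $\nu_i=\nu'_i\,g_S/g_{K^*}$ on $S$). Both routes lean on Lemma~\ref{lem:Zchordal}, but yours needs it only for the single complement set $V_m$ of the leaf rather than for all the branch sets $V_w$, at the price of the $\nu$-versus-$\nu'$ bookkeeping and the extra splitting of $Z_{G'}$ over the activity of $S$. That bookkeeping is exactly the rescaling mechanism that the paper isolates later in Algorithm~\ref{algo:nui}, Proposition~\ref{prop:peo} and the theorem of Section~\ref{sec:beyond} (adding a node whose neighbourhood is a clique), so your proof makes the connection between the clique-tree formula and the elimination-ordering viewpoint explicit, whereas the paper's is the shorter derivation once Lemma~\ref{lem:Zchordal} is in hand. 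All the individual claims you make (that $T'$ is a clique tree of $G'$ on $\mathcal{K}_G\setminus\{K^*\}$, that nodes of $K^*\setminus S$ lie in no other maximal clique, the rate comparison, and the final cancellation $g_{K^*}(\vec\theta)-g_S(\vec\theta)+\sum_{i\in K^*\setminus S}\theta_i=0$) check out.
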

\begin{proof}
Let $K_m$ be a leaf in the clique tree $T=(\mathcal{K}_G,\mathcal{E})$ and $(K_l,K_m) \in \mathcal{E}$. Clearly
$Z_{n}(\vec \theta)$ can be written by separating the terms where one of the nodes in $K_m$ is
active and those where these nodes are silent. By using similar reasoning than in Lemma \ref{lem:Zchordal}, we get 
\[Z_{n}(\vec \theta) = \sum_{i \in K_m} \nu_i(\vec \theta) \prod_{(K_{w},K_z) \in \mathcal{U}_i} Z_{V_{w}}(\vec \theta) + Z_{V_l}(\vec \theta), \]
as $V_l = \{1,\ldots,n\} \setminus K_m$, where 
\[\mathcal{U}_i = \{(K,K') | (K,K') \in \mathcal{E},
K \in \mathcal{K}_G \setminus V(T_i), K' \in  V(T_i)\}.\] 
This yields
\begin{align*}
 Z_{n}(\vec \theta) &= \sum_{i \in K_m} \theta_i 
\frac{\prod_{(K,K')\in \mathcal{E}, i \in K\cap K'} g_{K\cap K'}(\vec \theta)}
{ \prod_{K \in \mathcal{K}_G, i \in K} g_K(\vec \theta)} \\
&\hspace*{1cm} \times \frac{\prod_{(K,K')\in \mathcal{E}, i \not\in K\cap K'}g_{K\cap K'}(\vec \theta) }
{\prod_{K \in \mathcal{K}_G, i \not\in K} g_{K}(\vec \theta) }\\
&+ \frac{\prod_{(K,K')\in \mathcal{E}} g_{K\cap K'}(\vec \theta)}
{ \prod_{K \in \mathcal{K}_G \setminus \{K_m\}} g_K(\vec \theta)} \\
&= \frac{\prod_{(K,K')\in \mathcal{E}} g_{K\cap K'}(\vec \theta)}
{ \prod_{K \in \mathcal{K}} g_K(\vec \theta)} 
\underbrace{\left( \sum_{i \in K_m} \theta_i + g_{K_m}(\vec \theta)\right)}_{= 1},
\end{align*}
by Lemma \ref{lem:Zchordal} as $\mathcal{K}_l = \mathcal{K}_G \setminus \{K_m\}$.
\end{proof}

\begin{theorem}\label{th:achieve_chordal} 
Consider a network with a chordal conflict graph $G$.
Let $\vec \theta = (\theta_1,\ldots,\theta_n)$ be a positive vector with $T=\max_{j \in \mathcal{K}_G} \sum_{s\in K_j} \theta_s < 1$.
The throughput of node $i$, for $i=1,\ldots,n$, in a network with conflict graph $G$
matches $\theta_i$ if and only if the back-off rates are set according to \eqref{eq:nui_chordal2}.
\end{theorem}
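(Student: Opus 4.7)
Uniqueness of the back-off vector for any achievable throughput is already given in \cite{vandeven5}, so the ``only if'' direction is free once we show that the rates \eqref{eq:nui_chordal2} realize $\vec\theta$. The hypothesis $T<1$ makes every $g_K(\vec\theta)$ strictly positive, so each $\nu_i(\vec\theta)$ is a well-defined positive rate.

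I would start from the product-form identity
\[ \theta_i^{\mathrm{ach}} = \frac{\nu_i(\vec\theta)}{Z_n(\vec\theta)}\, Z_{V\setminus(\{i\}\cup\mathcal{N}_i)}(\vec\theta), \]
which holds because activity of node $i$ forces every neighbor of $i$ to be silent, so the states contributing to $\theta_i^{\mathrm{ach}}$ are in bijection with the independent sets on $V\setminus(\{i\}\cup\mathcal{N}_i)$. Lemma \ref{lem:Zchordal2} already supplies $Z_n(\vec\theta)$ in closed form, so the only remaining quantity to evaluate is the restricted normalizing constant.

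To evaluate it, let $T_i$ be the subtree of the clique tree $T=(\mathcal{K}_G,\mathcal{E})$ induced by the maximal cliques that contain $i$; because in a chordal graph every edge sits inside a maximal clique, $\{i\}\cup\mathcal{N}_i=\bigcup_{K\in V(T_i)}K$, and hence $V\setminus(\{i\}\cup\mathcal{N}_i)$ is the union of the node sets of the subtrees of $T$ that hang off $T_i$. The partition statement (Lemma 4 of \cite{blair1}) used in the proof of Lemma \ref{lem:Zchordal}, applied iteratively at each boundary edge $(K,K')$ with $K\in V(T_i)$ and $K'\notin V(T_i)$, shows that these node sets are pairwise disjoint and carry no edges of $G$ between them. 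Consequently
\[ Z_{V\setminus(\{i\}\cup\mathcal{N}_i)}(\vec\theta)=\prod_{(K,K')\in\mathcal{E},\, K\in V(T_i),\, K'\notin V(T_i)} Z_{V_{K'}}(\vec\theta), \]
and Lemma \ref{lem:Zchordal} evaluates each factor on the right.

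The final step is algebraic bookkeeping: partition $\mathcal{E}$ into edges internal to $T_i$, boundary edges, and edges external to $T_i$, and partition $\mathcal{K}_G$ into the cliques in $V(T_i)$ and the rest. The internal edges of $T_i$ are precisely those $(K,K')$ with $i\in K\cap K'$, so they supply the numerator of \eqref{eq:nui_chordal2}; the boundary edges come from the prefactors $g_{K\cap K'}(\vec\theta)$ in Lemma \ref{lem:Zchordal} and the external edges from its numerator; the clique factors from the three formulas cover $\mathcal{K}_G$ exactly once each. Combined with Lemma \ref{lem:Zchordal2}, every $g$-factor in the resulting expression cancels and one is left with $\theta_i$. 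The main obstacle I anticipate is precisely this bookkeeping, together with checking the degenerate cases $V(T_i)=\mathcal{K}_G$ (empty product of boundary edges and $Z_{V\setminus\cdots}=1$) and $|V(T_i)|=1$, both of which should drop out of the same formulas with no extra work.
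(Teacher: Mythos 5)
Your proposal is correct and takes essentially the same route as the paper: the paper's proof of Theorem~\ref{th:achieve_chordal} also writes the throughput of node $i$ as $\nu_i(\vec\theta)\prod_{(K_w,K_z)\in\mathcal{U}_i}Z_{V_w}(\vec\theta)/Z_n(\vec\theta)$, with the product taken over exactly your boundary edges, and then invokes Lemmas~\ref{lem:Zchordal} and~\ref{lem:Zchordal2} for the cancellation (leaving the uniqueness to \cite{vandeven5} as you do). Your write-up merely spells out the edge/clique bookkeeping that the paper leaves implicit, and that bookkeeping is sound.
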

\begin{proof}
The proof follows from Lemma \ref{lem:Zchordal} and \ref{lem:Zchordal2} by noting that the throughput of
node $i$ can be written as
\[ \nu_i(\vec \theta)   \prod_{(K_{w},K_z) \in \mathcal{U}_i} Z_{V_{w}}(\vec \theta)/Z_n(\vec \theta),\]
where 
\[\mathcal{U}_i = \{(K,K') | (K,K') \in \mathcal{E},
K \in \mathcal{K}_G \setminus V(T_i), K' \in  V(T_i)\},\] 
as before.
\end{proof}

\noindent {\bf Remarks.}

1) Theorem \ref{th:achieve_chordal} implies that $\Gamma$, the set of achievable
throughput vectors defined in \eqref{eq:Gamma}, can be expressed as
\[\Gamma =\{(\theta_1,\ldots,\theta_n)| \max_{j \in \mathcal{K}_G} \sum_{s\in K_j} \theta_s < 1\}.\]
Hence, \eqref{eq:nui_chordal2} can be applied for any achievable throughput vector when the network 
has a chordal conflict graph.\\

2) It is easy to see that \eqref{eq:nui_chordal2} coincides with \eqref{eq:nui_iline} 
in case the network is a generalized line network as $T=(\mathcal{K}_G,\mathcal{E})$ with
$\mathcal{E} =\{(j,j+1)|j=1,\ldots,m-1\}$, is a clique tree (in fact, it is the unique clique tree). 
If the conflict graph is acyclic \eqref{eq:nui_chordal2} reduces to \eqref{eq:nui_tree} as
the maximal cliques correspond to the edges $E(G)$ and the subgraph of $T$ induced by
the maximal cliques that contain $i \in V(G)$ forms a subtree $T_i$ of $T$ (as otherwise $T$ is not a clique tree).
The subtree $T_i$ contains exactly $|\mathcal{N}_i|-1$ edges $(K,K') \in \mathcal{E}$ and $K \cap K' = \{i\}$ for these
edges.\\

3) Let $k_i$ be the number of nodes in $T_i$ (the subtree of $T$ induced by the cliques containing $i$) and
 $k_{i,j}$ with $j \in \mathcal{N}_i$ the number of nodes of the subtree induced by the cliques containing
 both $i$ and $j$.  As $\Vert \vec \theta \Vert \downarrow 0$ we have
\begin{align*}
\nu_i(\vec \theta) &= \theta_i \left[1-\sum_{(K,K')\in \mathcal{E}, i \in K\cap K'} \sum_{s\in K\cap K'} \theta_s +O(\Vert \vec \theta \Vert^2)\right] \\
& \hspace*{2cm} \times  \left[1+\sum_{K \in \mathcal{K}_G, i \in K}\sum_{s \in K} \theta_s + O(\Vert \vec \theta \Vert^2)\right]\\
&= \theta_i \left[1-(k_i-1)\theta_i-\sum_{j \in \mathcal{N}_i}(k_{i,j}-1)\theta_j+O(\Vert \vec \theta \Vert^2)\right] \\
& \hspace*{2cm} \times  \left[1+k_i\theta_i-\sum_{j \in \mathcal{N}_i}k_{i,j}\theta_j+ O(\Vert \vec \theta \Vert^2)\right]\\
&= \theta_i \left(1+ \theta_i + \sum_{j \in \mathcal{N}_i} \theta_j\right)+O(\Vert \vec \theta \Vert^3),
\end{align*}
which is in agreement with the light-traffic approximation for general conflict graphs in Proposition 2 of \cite{vandeven5}.

\subsection{Perfect elimination ordering representation}\label{sec:algo}
In this subsection we show that the back-off rates as specified by \eqref{eq:nui_chordal2} can also be obtained using 
Algorithm \ref{algo:nui} if the network has a chordal conflict graph $G$. The first step of this algorithm exists
in determining a perfect elimination ordering of $G$, which can be achieved in $O(|V(G)|+|E(G)|)$ time by relying either on the
lexicographic breadth-first search algorithm \cite{rose1} or on the maximum cardinality search (MCS) algorithm introduced in
\cite{tarjan1}. The MCS algorithm determines the perfect elimination ordering by picking $\alpha(n)$ at random
and subsequently determines $\alpha(i)$ by selecting the node with the most neighbors in $\{\alpha(i+1),\ldots,\alpha(n)\}$,
breaking ties arbitrarily. For instance, for the example in Figure \ref{fig:chordal_example} the perfect elimination
order $(9,11,10,6,5,4,8,7,3,2,1)$ can be obtained in this manner.

After determining this ordering Algorithm \ref{algo:nui} operates such that after executing the main for loop 
from $n-1$ down to $j$, the rates $\nu_i(\vec \theta)$ are such that the throughput of node $\alpha(i)$, for $i \geq j$, is 
$\theta_i$ if we consider the network consisting of the nodes $\alpha(j),\ldots,\alpha(n)$ only.  In other words,
if we extend the network consisting of the nodes $\alpha(i+1),\ldots,\alpha(n)$ by node $\alpha(i)$, it suffices to adapt
the rate of all the neighbors of $\alpha(i)$ in the set $\{\alpha(i+1),\ldots,\alpha(n)\}$, which is denoted as
$\mathcal{M}_{\alpha(i)}$,  by the same factor and to
set the rate of $\alpha(i)$ as if it is part of a complete conflict graph consisting of the nodes $\{\alpha(i)\} \cup \mathcal{M}_{\alpha(i)}$
only.

\begin{algorithm}[t]
  \KwIn{A chordal conflict graph $G$} %
  \KwOut{Back-off rates $\nu_1(\vec \theta),\ldots,\nu_n(\vec \theta)$} %
  Determine a perfect elimination ordering  of $G$\\
	\For{$i=1$ to $n$ }{
	Let $\alpha(i)$ be the node in position $i$ in this order\;
	}
	\For{$i=1$ to $n$ }{
	Let $\mathcal{M}_{\alpha(i)} = \mathcal{N}_{\alpha(i)} \cap \{\alpha(i+1),\ldots,\alpha(n)\}$\;
	}
	
	$\nu_{\alpha(n)}(\vec \theta)=\theta_{\alpha(n)}/(1-\theta_{\alpha(n)})$\;
	\For{$i=n-1$ {\bf down to} $1$ }{
    $\nu_{\alpha(i)}(\vec \theta)=\theta_{\alpha(i)}/(1-\theta_{\alpha(i)}-\sum_{s \in \mathcal{M}_{\alpha(i)}} \theta_s)$\;
		\For{$j\in \mathcal{M}_{\alpha(i)}$}{
		$\nu_j(\vec \theta)=\nu_j(\vec \theta)\frac{1-\sum_{s \in \mathcal{M}_{\alpha(i)}} \theta_s}
		{1-\theta_{\alpha(i)}-\sum_{s \in \mathcal{M}_{\alpha(i)}} \theta_s}$\;
		}
	}
  \caption{\bf Algorithm to determine the back-off rates for a given achievable throughput vector $\vec \theta$ in a network with
	a chordal conflict graph $G$.
  }\label{algo:nui}
\end{algorithm}

The outcome of Algorithm \ref{algo:nui} can be expressed as follows:
\begin{align}\label{eq:nui_chordal}
\nu_{\alpha(i)}(\vec \theta) = \theta_{\alpha(i)} \
\frac{ \displaystyle \prod_{j<i, \alpha(i) \in \mathcal{N}_{\alpha(j)}}\left(1-\sum_{s \in \mathcal{M}_{\alpha(j)}} \theta_s\right)}
{ \displaystyle \prod_{j\leq i, \alpha(i) \in \mathcal{N}_{\alpha(j)}^+}\left(1-\theta_{\alpha(j)}-\sum_{s \in \mathcal{M}_{\alpha(j)}} \theta_s\right)},
\end{align}
where $\mathcal{N}_{i}^+ = \mathcal{N}_{i} \cup \{i\}$ and $\nu_{\alpha(i)}(\vec \theta) $ is the required back-off rate for the node 
in position $i$ in the perfect elimination ordering.

\begin{prop}\label{prop:peo}
The back-off rates specified by \eqref{eq:nui_chordal} are equal to \eqref{eq:nui_chordal2} irrespective of the perfect elimination
ordering used.
\end{prop}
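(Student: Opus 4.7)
The plan is to appeal to the uniqueness of achieving back-off rates. By Theorem \ref{th:achieve_chordal}, the rates \eqref{eq:nui_chordal2} realise the throughput vector $\vec\theta$, and by the uniqueness result of \cite{vandeven5} they are the only back-off rates that do so. It therefore suffices to show that, for any perfect elimination ordering $\alpha$, the rates \eqref{eq:nui_chordal} produced by Algorithm \ref{algo:nui} also realise $\vec\theta$; ordering-independence and equality with \eqref{eq:nui_chordal2} would then follow simultaneously.

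I would establish correctness of Algorithm \ref{algo:nui} by downward induction on $i$, reading the main loop in the order $i=n, n-1, \ldots, 1$ and carrying the invariant that, immediately after the iteration for position $i$, the rates currently stored on $V_{\geq i} = \{\alpha(i), \ldots, \alpha(n)\}$ achieve the throughputs $\theta_{\alpha(k)}$, $k \geq i$, in the sub-network induced on $V_{\geq i}$. This sub-network is chordal and the restriction of $\alpha$ to positions $i, i+1, \ldots, n$ is again a perfect elimination ordering. The base case $i=n$ is trivial, as a single node with rate $\theta_{\alpha(n)}/(1-\theta_{\alpha(n)})$ has throughput $\theta_{\alpha(n)}$. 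For the inductive step, the defining property of a perfect elimination ordering guarantees that the neighbours of $w := \alpha(i)$ in $V_{>i}$ form a clique $C = \mathcal{M}_w$ of $G$.

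The main technical step is a clique-neighbour extension identity: whenever a network on $V_{>i}$ with some rates $\nu'$ realises $\vec\theta$ restricted to $V_{>i}$ and $w$ interferes only with the nodes of a clique $C \subseteq V_{>i}$, multiplying $\nu_v'$ by $\rho := g_C(\vec\theta)/g_{C\cup\{w\}}(\vec\theta)$ for each $v \in C$ and setting $\nu_w := \theta_w/g_{C\cup\{w\}}(\vec\theta)$ extends the realisation to $V_{>i} \cup \{w\}$. To check this I would exploit that at most one node of the clique $C$ can be active simultaneously, splitting the old partition function as $Z = A + B$, where $A$ (resp.\ $B$) collects the feasible states with every (resp.\ exactly one) node of $C$ silent (resp.\ active). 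The induction hypothesis gives $A/Z = g_C(\vec\theta)$ and $B/Z = 1-g_C(\vec\theta)$; the scaled sub-network has partition function $A + \rho B$; and because an active $w$ forbids $C$, the extended partition function is $Z^{new} = (A + \rho B) + \nu_w A$. A short algebraic manipulation, relying on the identity $1 + \nu_w = \rho$ that the chosen $\nu_w$ satisfies, then yields $Z^{new} = \rho Z$, $\theta_w^{new} = \nu_w A/Z^{new} = \theta_w$, and $\theta_v^{new} = \theta_v$ for every $v \in V_{>i}$ (treating $v \in C$ and $v \notin C$ separately but by essentially the same calculation).

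The main obstacle is this clique-extension calculation; beyond that the proof is a clean induction. Since no step of the argument uses a specific feature of $\alpha$, the output of Algorithm \ref{algo:nui} is intrinsically independent of the perfect elimination ordering chosen in line~1, and by uniqueness it must coincide with \eqref{eq:nui_chordal2}.
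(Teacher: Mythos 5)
Your proof is correct, but it takes a genuinely different route from the paper's. The paper proves the identity directly: it builds a sequence of clique trees $T_\alpha^{(j)}$ for the induced subgraphs on $\{\alpha(j),\ldots,\alpha(n)\}$ and shows by induction that the rates held by Algorithm~\ref{algo:nui} after iteration $j$ coincide with formula \eqref{eq:nui_chordal2} evaluated on $T_\alpha^{(j)}$; ordering-independence then follows from the fact (Theorem 4.2 of \cite{blair1}) that the multiset of separators is the same for every clique tree. You instead prove that the algorithm's output \emph{achieves} $\vec\theta$ --- via a clique-neighbour extension identity whose partition-function bookkeeping ($Z=A+B$, $Z^{new}=\rho Z$ using $1+\nu_w=\rho$) is correct in every detail --- and then invoke Theorem~\ref{th:achieve_chordal} together with the uniqueness of the achieving back-off vector from \cite{vandeven5} to conclude equality. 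Your extension identity is essentially the theorem the paper proves later in Section~\ref{sec:beyond}, and the paper explicitly notes there that it ``can also be used to prove the correctness of Algorithm~\ref{algo:nui} directly,'' so your route is anticipated and sound; there is no circularity, since Theorem~\ref{th:achieve_chordal} rests only on Lemmas~\ref{lem:Zchordal} and~\ref{lem:Zchordal2}. What the two approaches buy is different: yours is probabilistically cleaner and reuses machinery the paper needs anyway, but it leans on the external global-invertibility result of \cite{vandeven5}; the paper's argument is purely algebraic and self-contained, and as a by-product makes explicit the clique tree that the perfect elimination ordering implicitly constructs, which is what ties the two representations together structurally.
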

\begin{proof}
We start by defining a set of $n$ trees recursively, such that tree $T_\alpha^{(j)}$ is a clique tree for the
graph $G_\alpha^{(j)}$ induced by $\{\alpha(j),\ldots,\alpha(n)\}$. Let $T_\alpha^{(n)}$ be a tree consisting of a single
node labeled $\{\alpha(n)\}$, clearly this is a clique tree for the graph $G_\alpha^{(n)}$. 
Given that $T_\alpha^{(j+1)}$ is a clique tree for the graph $G_\alpha^{(j+1)}$,
we now define $T_\alpha^{(j)}$, for $j=n-1$ down to $1$.  
The nodes $\mathcal{M}_{\alpha(j)}$ form a clique (as $\alpha$ is a perfect elimination ordering), therefore
there exists a maximal clique $K$ in $G_\alpha^{(j+1)}$ such that $\mathcal{M}_{\alpha(j)} \subset K$. Either
$K = \mathcal{M}_{\alpha(j)}$, in which case we obtain a clique tree $T_\alpha^{(j)}$ for $G_\alpha^{(j)}$ by replacing the node $K$
of $T_\alpha^{(j+1)}$ by $K'= K \cup \{\alpha(j)\}$ (as $K$ is not a maximal clique in $G_\alpha^{(j)}$). Otherwise, we construct $T_\alpha^{(j)}$
by adding a node $K'=\{\alpha(j)\} \cup \mathcal{M}_{\alpha(j)}$ to $T_\alpha^{(j+1)}$ and connect $K'$ with $K$.
   
We now show that the $\nu_i(\vec \theta)$ values obtained by Algorithm \ref{algo:nui} after executing the main for
loop down to $j$ are equal to 	 
\begin{align*}
x_i^{(j)}  = \theta_i \
\frac{\prod_{(K,K')\in E(T_\alpha^{(j)}), i \in K\cap K'} \left(1-\sum_{s \in K\cap K'} \theta_s\right)}
{ \prod_{K \in V(T_\alpha^{(j)}), i \in K}\left(1-\sum_{s \in K} \theta_s\right)},
\end{align*}
for $i \in \{\alpha(j),\ldots,\alpha(n)\}$,
which suffices to prove the proposition as $T^{(1)}_\alpha$ is a clique tree for $G$.  
For $j=n$ this is true as $\nu_{\alpha(n)}(\vec \theta) = \theta_{\alpha(n)}/(1-\theta_{\alpha(n)}) = x_{\alpha(n)}^{(n)}$.

Assume $\nu_i(\vec \theta)$, for $i \in \{\alpha(j+1),\ldots,\alpha(n)\}$, is equal to $x_i^{(j+1)}$ after executing
the main for loop down to $j+1$. When $i=j$, the rate $\nu_i(\vec \theta)$, with $i \in \mathcal{M}_{\alpha(j)}$, are multiplied by
$g_{\mathcal{M}_{\alpha(j)}}(\vec \theta)/g_{\mathcal{M}_{\alpha(j)}\cup \{\alpha(j)\}}(\vec \theta)$,
while $\nu_{\alpha(j)}(\vec \theta)$ is set equal to $\theta_{\alpha(j)}/g_{\mathcal{M}_{\alpha(j)}\cup \{\alpha(j)\}}(\vec \theta)$ 
and the other rates remain the same. By construction of $T_\alpha^{(j)}$ the only maximal clique $K$ in $G_\alpha^{(j)}$ that contains
$\alpha(j)$ is $K=\{\alpha(j)\} \cup \mathcal{M}_{\alpha(j)}$, meaning $x_{\alpha(j)}^{(j)} = \nu_{\alpha(j)}(\vec \theta)$.

For $i \in \mathcal{M}_{\alpha(j)}$ we have two cases. Either $V(T_\alpha^{(j)})=V(T_\alpha^{(j+1)})$ in which case
the set of edges $\{(K,K') | i \in K \cap K'\}$ in $T_\alpha^{(j)}$ and $T_\alpha^{(j+1)}$ are the same, while
the clique $K=\mathcal{M}_{\alpha(j)}$ is replaced by the clique $K'=\mathcal{M}_{\alpha(j)} \cup \{\alpha(j)\}$.
Hence, 
\begin{align}\label{eq:xij}
x_i^{(j)}=x_i^{(j+1)}g_{\mathcal{M}_{\alpha(j)}}(\vec \theta)/g_{\mathcal{M}_{\alpha(j)}\cup \{\alpha(j)\}}(\vec \theta).
\end{align}
When $V(T_\alpha^{(j)})\not=V(T_\alpha^{(j+1)})$, the tree $T_\alpha^{(j)}$ is identical to $T_\alpha^{(j+1)}$,
except that it contains an extra node $K'=\mathcal{M}_{\alpha(j)} \cup \{\alpha(j)\}$ 
and edge $(\mathcal{M}_{\alpha(j)},K')$, which yields \eqref{eq:xij}. 

Finally, by construction of $T_\alpha^{(j)}$, it is easy to see that $x_i^{(j)}=x_i^{(j+1)}$ for $i \not\in
\mathcal{M}_{\alpha(j)}\cup \{\alpha(j)\}$. 
\end{proof}

\subsection{Distributed algorithm}\label{sec:dist}
Algorithm \ref{algo:nui} can be used to determine the back-off rates in a centralized manner. In this section we indicate that
a node can determine its back-off rate obtained by Algorithm \ref{algo:nui} 
in a fully distributed manner with limited message passing. More specifically,
it suffices for node $i$ to discover its set of neighbors $\mathcal{N}_i$ in the conflict graph $G$, their target throughputs 
$\{\theta_j | j \in \mathcal{N}_i\}$ 
as well as the set of neighbors $\mathcal{N}_j$ for each $j \in \mathcal{N}_i$. With this information node $i$ can construct the subgraph $G[\mathcal{N}_i^+]$ 
of the conflict graph $G$ induced by the nodes in $\mathcal{N}_i^+$. 

To obtain the back-off rate $\nu_{i}(\vec \theta)$ node $i$
executes Algorithm  \ref{algo:nui} on the graph $G[\mathcal{N}_i^+]$ and sets its own back-off rate accordingly. Note that 
when node $i$ runs Algorithm  \ref{algo:nui} locally on the conflict graph $G[\mathcal{N}_i^+]$, it computes a back-off
rate for all the nodes in  $\mathcal{N}_i^+$. However, only the rate computed for node $i$ itself is of interest and used. 
The next proposition proves the correctness of this distributed algorithm:

\begin{prop}
The back-off rate for node $i \in V$ given by executing Algorithm \ref{algo:nui} on the conflict graph
$G$ is identical to the rate obtained for node $i$ when executing Algorithm \ref{algo:nui} on the subgraph $G[\mathcal{N}_i^+]$
of $G$ induced by the nodes in  $\mathcal{N}_i^+$.
\end{prop}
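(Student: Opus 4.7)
The plan is to reduce the claim to verifying that the explicit formula \eqref{eq:nui_chordal2} evaluated at node $i$ yields the same value on $G$ and on $G[\mathcal{N}_i^+]$. The induced subgraph $G[\mathcal{N}_i^+]$ is chordal because the class of chordal graphs is closed under taking induced subgraphs, and if $\vec \theta$ is achievable on $G$ it is also achievable on $G[\mathcal{N}_i^+]$, since every maximal clique of the induced subgraph sits inside some maximal clique of $G$. Proposition \ref{prop:peo} therefore applies to both runs of Algorithm \ref{algo:nui}, and each run returns the value of \eqref{eq:nui_chordal2} at $i$ on the corresponding graph. The remaining task is a combinatorial identity between these two evaluations.

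First I would match the denominators by comparing the maximal cliques containing $i$ in each graph. Any maximal clique $K$ of $G$ with $i \in K$ satisfies $K \subseteq \mathcal{N}_i^+$, since every other vertex of $K$ is adjacent to $i$; hence $K$ remains a maximal clique in $G[\mathcal{N}_i^+]$. Conversely, every maximal clique $K'$ of $G[\mathcal{N}_i^+]$ must contain $i$: otherwise $K' \subseteq \mathcal{N}_i$, so $K' \cup \{i\}$ would still be a clique in $G[\mathcal{N}_i^+]$, contradicting maximality of $K'$. Thus the maximal cliques of $G$ containing $i$ and the maximal cliques of $G[\mathcal{N}_i^+]$ coincide as vertex sets, which already gives equality of the denominators in \eqref{eq:nui_chordal2} at node $i$.

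For the numerators, I would pick any clique tree $T = (\mathcal{K}_G, \mathcal{E})$ of $G$ and let $T_i$ denote the subtree of $T$ induced by the cliques containing $i$. By the clique tree property, the edges of $\mathcal{E}$ whose separator contains $i$ are exactly the edges of $T_i$, so the numerator of \eqref{eq:nui_chordal2} on $G$ equals $\prod_{(K,K') \in E(T_i)}\bigl(1 - \sum_{s \in K \cap K'} \theta_s\bigr)$. The key step is to verify that $T_i$ is itself a clique tree for $G[\mathcal{N}_i^+]$: its node set is the set of maximal cliques of $G[\mathcal{N}_i^+]$ by the previous paragraph, and for any $v \in \mathcal{N}_i^+$ the cliques of $T_i$ containing $v$ are precisely those in $V(T_v) \cap V(T_i)$, which is the intersection of two subtrees of $T$ and hence a subtree. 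Using $T_i$ as the clique tree in \eqref{eq:nui_chordal2} on $G[\mathcal{N}_i^+]$ then reproduces the same numerator, since every edge of $T_i$ carries $i$ in its separator.

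The main obstacle, as I see it, is this clique-tree verification, in particular checking that $V(T_v) \cap V(T_i)$ is non-empty for every $v \in \mathcal{N}_i^+$; this follows because the edge $\{i,v\}$ lies in some common maximal clique of $G$. A cleaner but less constructive alternative would be to invoke the invariance of the multiset $\{K \cap K' \mid (K,K') \in \mathcal{E}\}$ from Remark 1 after \eqref{eq:nui_chordal2}, restricted to those separators that contain $i$, and then extract the same multiset from any clique tree of $G[\mathcal{N}_i^+]$; but the direct construction via $T_i$ feels more transparent.
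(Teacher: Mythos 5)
Your proof is correct, but it takes a genuinely different route from the paper's. The paper argues directly on Algorithm \ref{algo:nui}: it fixes the perfect elimination ordering $\alpha$ used on $G$, notes that its restriction to $\mathcal{N}_i^+$ is a peo of $G[\mathcal{N}_i^+]$ (so Proposition \ref{prop:peo} lets one use that restricted order in the local run), and then tracks the multiplicative updates to $\nu_i(\vec\theta)$: the rate of $i$ is updated only when a neighbor $j$ of $i$ preceding $i$ in the order is processed, and the peo property forces $\mathcal{M}_j\subseteq\mathcal{N}_i^+$ (all later neighbors of $j$ lie in the clique $\{j\}\cup\mathcal{M}_j$ containing $i$), so the update factors on $G$ and on $G[\mathcal{N}_i^+]$ coincide term by term. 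You instead route both runs through the clique-tree formula \eqref{eq:nui_chordal2} via Proposition \ref{prop:peo} and prove a structural identity: the maximal cliques of $G$ containing $i$ are exactly the maximal cliques of $G[\mathcal{N}_i^+]$ (matching denominators), and the subtree $T_i$ of a clique tree $T$ of $G$ induced by the cliques containing $i$ is a clique tree of $G[\mathcal{N}_i^+]$ whose edges are exactly those of $\mathcal{E}$ with $i$ in the separator (matching numerators); your verification that $V(T_v)\cap V(T_i)$ is a non-empty subtree for each $v\in\mathcal{N}_i^+$ is the right thing to check and is correct. The paper's argument is shorter and more elementary (a local bookkeeping of update factors), while yours buys a structural byproduct -- it re-derives, independently of the algorithm, why $\nu_i(\vec\theta)$ in \eqref{eq:nui_chordal2} depends only on $\{\theta_s \mid s\in\mathcal{N}_i^+\}$, which the paper states as Remark 2 but does not prove in this form. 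The only cosmetic point is that your aside about achievability of $\vec\theta$ on the induced subgraph is not needed: the comparison of the two evaluations of \eqref{eq:nui_chordal2} is a purely algebraic identity.
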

\begin{proof}
Let $\alpha$ be the perfect elimination order (peo) used when executing Algorithm \ref{algo:nui} on $G$. 
Clearly, if we order the nodes in  $\mathcal{N}_i^+$ in the same manner we obtain a peo for $G[\mathcal{N}_i^+]$
and therefore $G[\mathcal{N}_i^+]$ is chordal. We now argue that if node $i$  executes Algorithm \ref{algo:nui} locally
on $G$ using this same peo for the nodes in $\mathcal{N}_i^+$, node $i$ obtains the same rate as when
performing Algorithm \ref{algo:nui} on $G$. As the peo used for $G[\mathcal{N}_i^+]$ has no impact on the rates obtained 
by Algorithm \ref{algo:nui} (due to Proposition \ref{prop:peo}), this suffices to complete the proof. 

The rate of node $i$ is initially set equal to $\theta_i / (1-\theta_i-\sum_{s \in \mathcal{M}_i} \theta_s)$ when 
Algorithm \ref{algo:nui} is executed on $G$ or $G[\mathcal{N}_i^+]$.  This rate is updated whenever we encounter a node
$j \in \mathcal{N}_i$ that appears before node $i$ in the peo. When encountering such a node, the rate of node $i$ is multiplied by
  $(1-\sum_{s \in \mathcal{M}_j} \theta_s)/(1-\theta_j-\sum_{s \in \mathcal{M}_j} \theta_s)$ when running 
	Algorithm \ref{algo:nui} on $G$ and by $(1-\sum_{s \in \mathcal{M}_j \cap \mathcal{N}_i^+} \theta_s)/
	(1-\theta_j-\sum_{s \in \mathcal{M}_j\cap \mathcal{N}_i^+ } \theta_s)$ when running it on $G[\mathcal{N}_i^+]$. 
	As $j$ appears before $i$ and $j$ is a neighbor of $i$, all the nodes appearing
	after $j$ that belong to $\mathcal{N}_j$ are also neighbors of $i$ (otherwise the order is not a peo). 
	In other words all the nodes in $\mathcal{M}_j$ are also part of $G[\mathcal{N}_i^+]$ and $\mathcal{M}_j\cap \mathcal{N}_i^+  = \mathcal{M}_j$.
\end{proof}

\section{Chordal approximations}\label{sec:approx}
The purpose of this section is twofold. First, using $2$ arbitrarily chosen chordal conflict graphs 
with $100$ nodes, we confirm the correctness of our main result by simulation.
Second, we investigate the
possibility of using our main result to construct approximate back-off rates for general conflict graphs.

The first of the $2$ randomly chosen examples  (see Figure \ref{fig:n100})
contains $70$ maximal cliques, the largest one containing only $7$ nodes. The nodes have
on average $2.58$ conflicting nodes, while node $99$ has $51$ conflicts. We simulated the
corresponding Markov chain
up to time $t=10^7$ using \eqref{eq:nui_chordal} such that the target throughput of each link matches $1/20$.
The lowest and highest observed throughputs, computed as the number of packets transmitted divided by $t$, 
among the $100$ nodes were $0.04983$ and $0.05014$, respectively. Note that we selected $1/20$
as higher values require higher back-off rates which slows down the convergence of the Markov chain to its
steady state. For instance, if we aim at a fair throughput of $1/8$, the back-off rate of node $99$
should be set equal to   $3038765625/131072 \approx 23184$ according to \eqref{eq:nui_chordal}, meaning
node $99$ will have periods were it can transmit many packets in a row, followed by long periods of inactivity.

\begin{figure}[t]
\center
\includegraphics[width=0.22\textwidth]{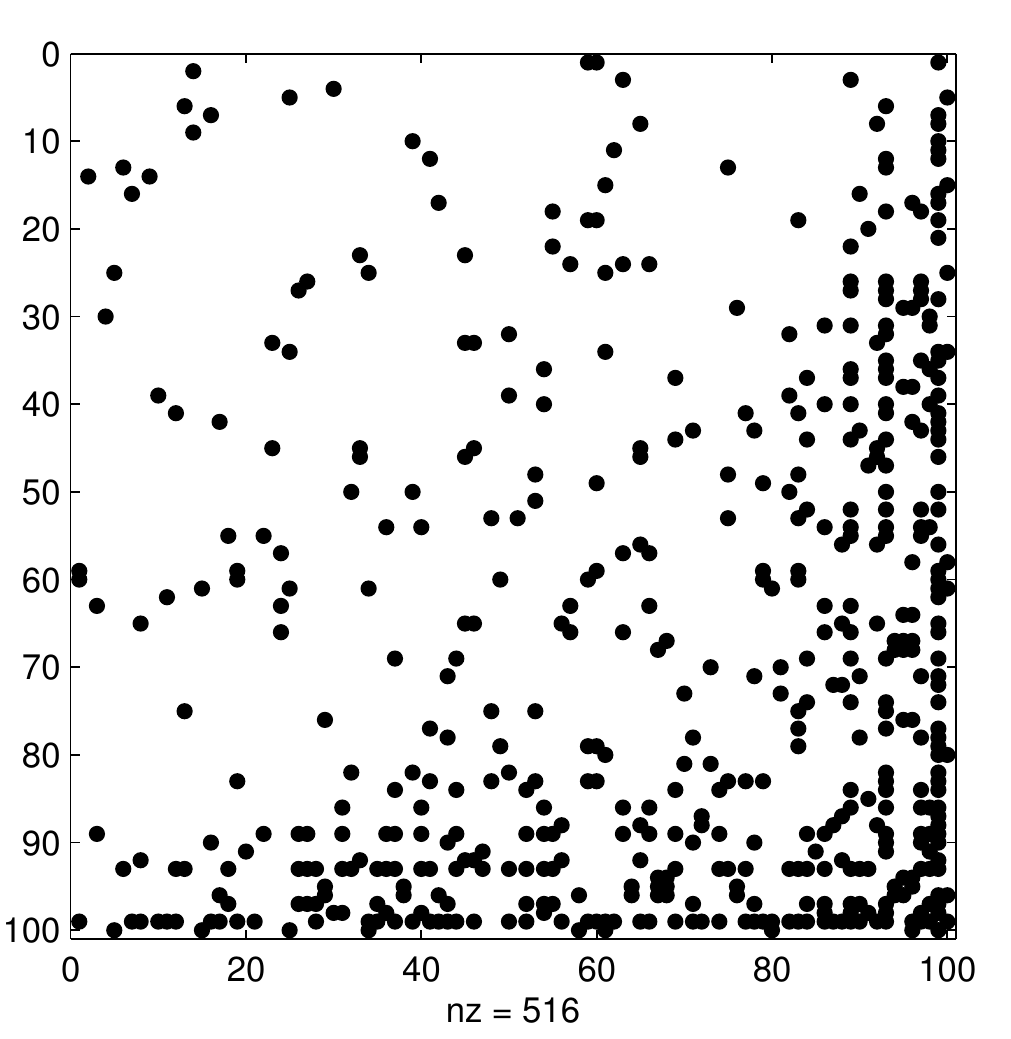}
\includegraphics[width=0.22\textwidth]{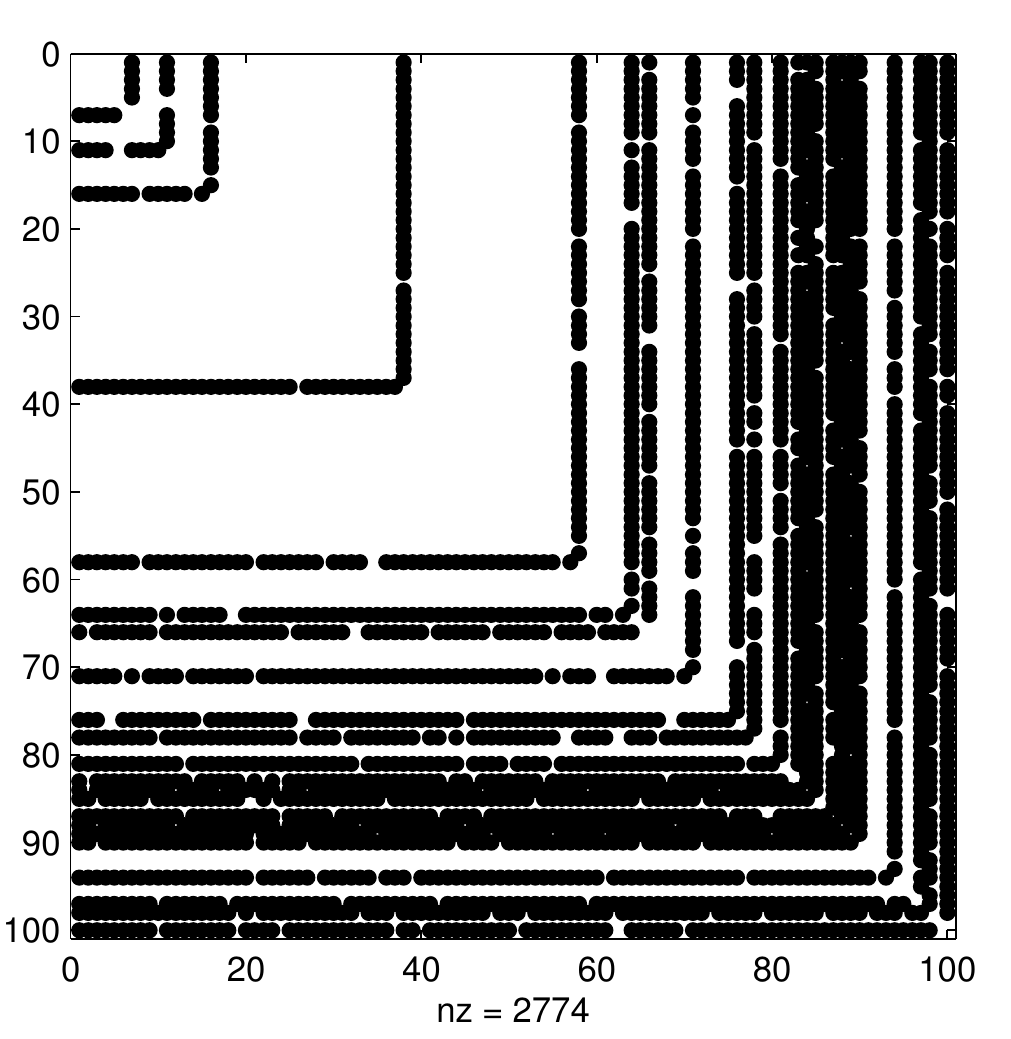}
\caption{Adjacency matrix $A$ of the $2$ chordal conflict graphs with $n=100$ nodes. $A(i,j) = 1$ if
$(i,j) \in E$ and $0$ otherwise. The number of non-zero entries in $A$ is denoted as nz.}
\label{fig:n100}
\end{figure}

The second size $100$ conflict graph considered is of a different nature (see Figure \ref{fig:n100}) as it is more dense 
than the first, it contains $79$ maximal cliques, where the largest one has size $23$. Simulating the Markov chain (up to time $t=10^7$)
with the target throughput of each node set to $1/50$ resulted in a lowest and highest observed throughput of $0.01990$ and $0.02012$,
respectively. 



An interesting question at this point is whether we can define a {\it chordal} approximation 
for the rates $\nu_i(\vec \theta)$ when the conflict graph $G$ is {\it not} chordal.
To study the accuracy of the proposed approximations we considered a class of conflict graphs obtained by placing a set of $n$ nodes 
in a random manner in a square of size $1$ and assumed that node $i$ and $j$ are in conflict when the Euclidean distance 
between node $i$ and $j$ was below some threshold $R$, e.g., $0.2$ (see Figure \ref{fig:network}). 
For the experiments conducted we considered cases with $n=50$ and $n=100$ nodes. 

\begin{figure}[t]
\center
\includegraphics[width=0.3\textwidth]{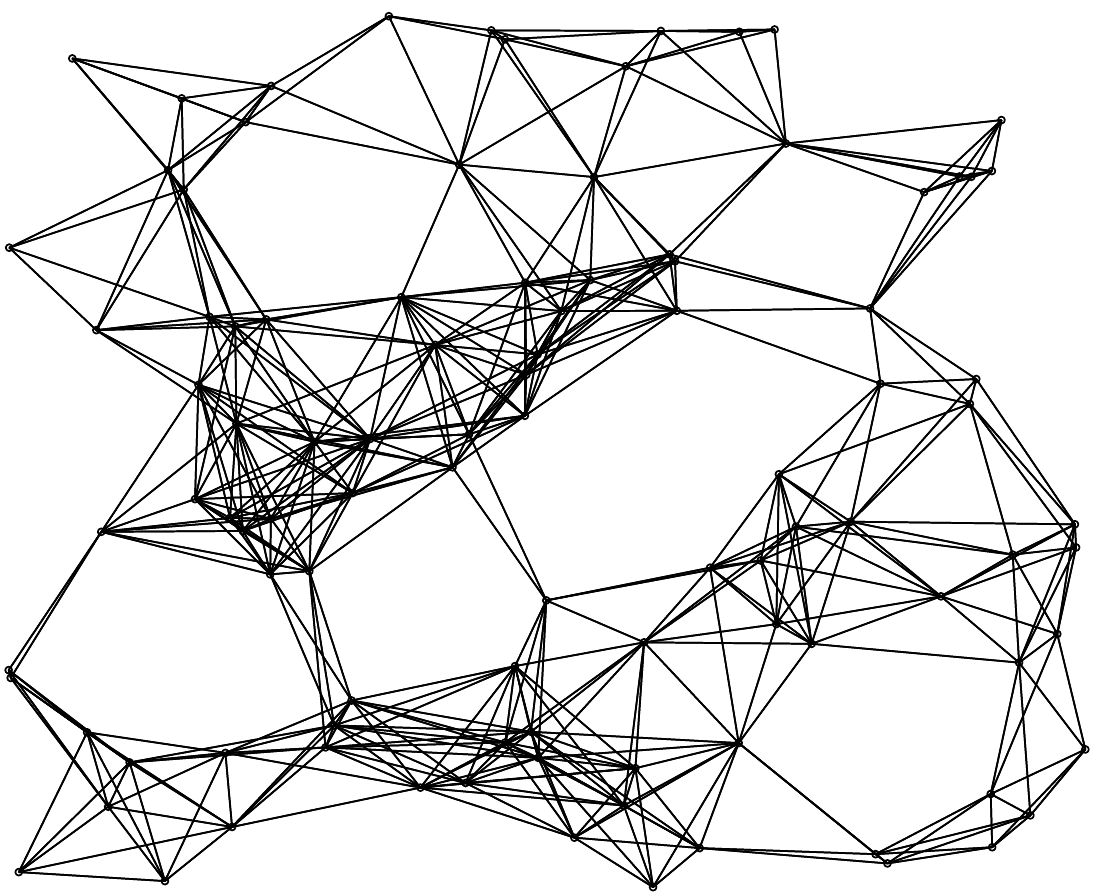}
\caption{Non-chordal conflict graph with $n=100$ nodes and $485$ edges.}
\label{fig:network}
\end{figure}

A first option is to note that any graph has a {\it chordal completion}, that is, we can make any graph $G$ chordal by
adding a set $E'$ of edges to $G$. While the problem of finding the set $E'$ of minimal cardinality
(known as the minimum fill) or the set $E'$ that minimizes the size of the maximum clique (known as
minimum treewidth) are NP-complete \cite{yannakakis1}, the former can be approximated using a simple {\it minimum degree} 
heuristic \cite{heggernes1}. For instance, when applied to the conflict graph in Figure \ref{fig:network}, the set $E'$
contains $273$ edges, which indicates that the conflict graph is far from being chordal.
The general observation of these experiments was that the chordal completion
approximation is {\it far less} accurate than the earlier mentioned Bethe approximation.
The main reason is that the degree of some nodes is substantially larger in the chordal completion and therefore 
the approximation suggests to use back-off rates that are much larger than what is needed in the original network.




We should note that the minimum degree heuristic of \cite{heggernes1} does not necessarily produce a set $E'$ that is minimal and
efficient algorithms that can reduce $E'$ to a minimal set have been developed. Hence, we could apply 
one of these algorithms to reduce $E'$ to a minimal set $E''$, such that $(V(G),E(G)\cup E'')$ is chordal.
This should improve the accuracy of the chordal completion approximation. 
However, experiments in \cite{blair2} have shown that the minimum degree heuristic often produces sets $E'$ that
are nearly minimal and as such the improvement is likely to be very minor.

\begin{algorithm}[t]
  \KwIn{A general conflict graph $G=(V,E)$} %
  \KwOut{A maximal chordal subgraph $\tilde G=(V,\tilde E)$ and peo $\alpha$} %
  \For{$v \in V$ }{
	$C(v) = \emptyset$\;
	}
	$k=|V|$; $\tilde E=\emptyset$\;
	Select any $v_0 \in V$; set $S_k=\{v_0\}$; $\alpha(k) = v_0$\;
	\For{$u \in V \setminus S_k$ with $(u,v_0) \in E$}{
		\If{$C(u) \subseteq C(v_0)$}{
		$C(u)=C(u)\cup\{v_0\}$;$\tilde E=\tilde E\cup (u,v_0)$;}
	}
Let $v_0 \in V \setminus S_k$ with $|C(v_0)|\geq |C(v)|$ for $v \in V \setminus S_k$\;
	Set $\alpha(k-1)=v_0$; $S_{k-1}=S_k \cup \{v_0\}$; $k=k-1$\;
	\If{$k>1$}{Go to line 6;} 
  \caption{\bf MAXCHORD algorithm of \cite{dearing1}.
  }\label{algo:MAXCHORD}
\end{algorithm}

Another manner to construct a chordal approximation exists in constructing a chordal subgraph of $G$ and using the back-off rates
of the subgraph in the original network. For this purpose we can rely on the MAXCHORD algorithm introduced in \cite{dearing1} (see Algorithm \ref{algo:MAXCHORD}).
This algorithm is a variation on the MCS algorithm discussed earlier, which
given a graph $G$, computes a maximal chordal subgraph $\tilde G=(V(G),\tilde E)$ 
with $\tilde E \subset E(G)$.  As the node $v_0$ in line $5$ we selected a node of maximum degree and the same was done on line $11$
of the algorithm when breaking ties.

Numerical experiments indicated that the chordal subgraph approximation is a lot more accurate than the chordal
completion approximation, but is still less accurate compared to the Bethe approximation. 
The inaccuracy is caused by the fact that some nodes use back-off
rates that are too small, leading to a throughput below target, as they have more conflicting nodes in the original network
than in the chordal subgraph. Further, both the chordal completion and subgraph approximations do not have an obvious distributed implementation.

\begin{table}
\center
\begin{tabular}{ l | c c| c c |c c }
 \hline
& \multicolumn{2}{c|}{$R=0.15,$} & \multicolumn{2}{c|}{$R=0.2,$} & \multicolumn{2}{c}{$R=0.25,$}\\
& \multicolumn{2}{c|}{$k_{max}=7$} & \multicolumn{2}{c|}{$k_{max}=9$} & \multicolumn{2}{c}{$k_{max}=12$}\\ 
  $\theta_i$ & LCS & Bethe & LCS & Bethe & LCS & Bethe \\ \hline
 $\sfrac{0.45}{k_{max}}$ & $0.23\%$ & $3.48\%$ & $0.40\%$&$5.68\%$ &$0.74\%$&$6.98\%$\\
$\sfrac{0.55}{k_{max}}$ & $0.35\%$ & $4.88\%$ & $0.65\%$& $7.78\%$&$1.20\%$&$9.32\%$ \\
 $\sfrac{0.65}{k_{max}}$ & $0.54\%$ & $6.37\%$ & $1.00\%$&$9.94\%$ &$2.02\%$&$11.72\%$\\
 $\sfrac{0.75}{k_{max}}$ & $0.93\%$ & $7.92\%$ & $1.61\%$& $12.16\%$&$3.56\%$&$14.05\%$\\
 $\sfrac{0.85}{k_{max}}$& $1.52\%$ & $9.52\%$ & $2.66\%$& $14.37\%$& $6.64\%$&$16.34\%$\\ 
\hline
\end{tabular}
\vspace*{1mm}
\caption{Average relative deviation from the target throughput in a simulation run of length $10^7$ for the
local chordal subgraph (LCS) and Bethe approximation, where $k_{max}$ is the maximum clique size of the conflict graph.}
\label{tab:approx_res_locsub}  
\end{table}


The solution in developing a more accurate and distributed chordal approximation exists in letting node $i$ determine
its back-off rate by computing a maximal chordal subgraph $G_i$ of $G[\mathcal{N}_i^+]$, the subgraph of $G$ induced by 
$\mathcal{N}_i^+$, and computing its rate using Algorithm \ref{algo:nui} on $G_i$. To determine the subgraph $G_i$
node $i$ runs the MAXCHORD algorithm of \cite{dearing1} on $G[\mathcal{N}_i^+]$ with $v_0=i$ in line $5$ (see Algorithm \ref{algo:MAXCHORD}). 
We refer to this approximation as the {\it local} chordal subgraph (LCS) approximation. When the graph $G$ is chordal this distributed
algorithm corresponds to the distributed algorithm of Section \ref{sec:dist} (as MAXCHORD determines a
{\it maximal} chordal subgraph) and is therefore exact.

Note the Bethe approximation corresponds to using the subtree consisting of the edges $(i,j)$ with $j \in \mathcal{N}_i$, instead of 
the maximal chordal subgraph $G_i$ and applying Algorithm \ref{algo:nui} on this subtree. As such the local chordal subgraph
approximation takes more conflicts of $G$ into account and can be expected to be more accurate than the Bethe
approximation. 
This was confirmed using numerical experiments. In Table \ref{tab:approx_res_locsub} we show the average relative 
deviation from the target throughout observed in a simulation run of length $10^7$ for the LCS and Bethe approximation.
We considered three different conflict graphs each with $100$ nodes with varying values for the interference threshold $R$
and the target throughput $\theta_i$, resulting in a total of $15$ setups.

\section{Beyond chordal graphs}\label{sec:beyond}

In this section we indicate that obtaining a simple closed form expression for the back-off vector
that achieves a given achievable throughput vector $\vec \theta$ appears problematic when the conflict
graph $G$ is not chordal. For this purpose we first consider the smallest non-chordal graph, that is, 
a ring network consisting of $4$ nodes (see Figure \ref{fig:ring4}). In this particular case it is still possible to obtain a closed form expression
for the back-off rates $\nu_i(\vec \theta)$, but the expression does not appear to have a very elegant form.
More importantly, this expression shows that the back-off rate of a node no longer solely depends on its
own target throughput and the target throughput of its neighbors in $G$, as demonstrated in Figure \ref{fig:rate3}.
This figure illustrates that the required back-off rate of node $3$ decreases from $3/4$ to $1/\sqrt{2}$ as $\theta_1$ increases
from $0$ to $1/4$, while the target throughput of nodes $2, 3$ and $4$ is fixed at $1/4$ and node $1$ and $3$ do
not interfere.

\begin{figure}[t]
\begin{center}
\begin{tikzpicture}[scale=0.9]
\tikzstyle{every node}=[circle, draw, fill=black!10,
                        inner sep=0pt, minimum width=10pt]

\node (1) at (1,0) []{$1$};
\node (2) at (2,0) []{$2$};
\node (3) at (3,0) []{$3$};
\node (4) at (4,0) []{$4$};

\draw [] (1) -- (2);
\draw [] (2) -- (3);
\draw [] (3) -- (4);
\draw (1) to[out=45,in=135] (4);

\end{tikzpicture}
\end{center}
\caption{Conflict graph of ring network consisting of $n=4$ nodes.}
\label{fig:ring4}
\end{figure}
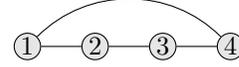

\begin{figure}[t]
\center
\includegraphics[width=0.4\textwidth]{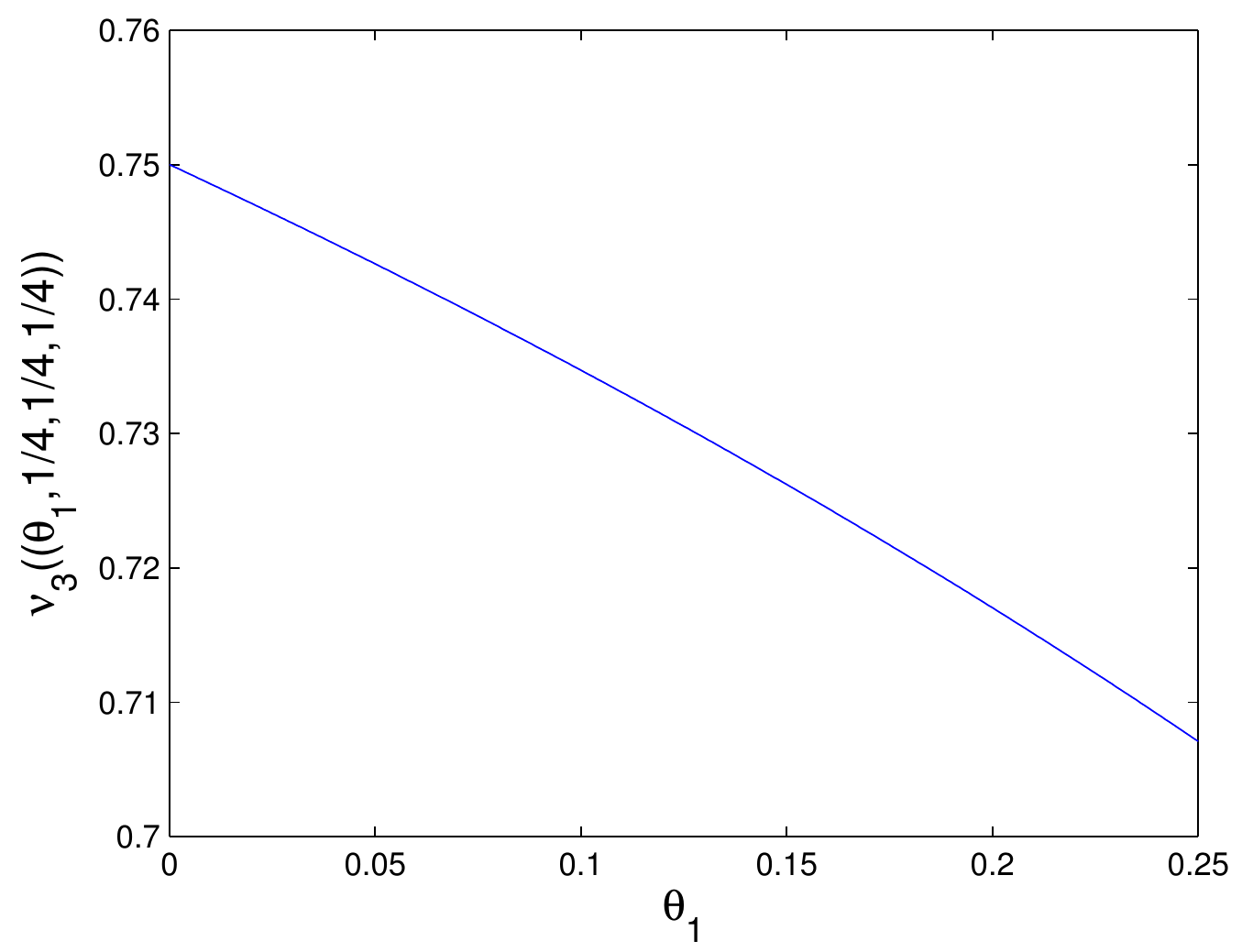}
\caption{Back-off rate of node $3$ as a function of $\theta_1$ in a ring network with $n=4$ nodes, 
where $\theta_2 = \theta_3 = \theta_4 = 1/4$. }
\label{fig:rate3}
\end{figure}

The second example consists of a ring network with $4$ nodes that is extended by a fifth node that interferes
with nodes $3$ and $4$ (see Figure \ref{fig:nonchord2}). What is worth noting here is that the neighbors
of node $5$ form a clique.  Let $\vec \theta = (\theta_1,\ldots,\theta_5)$ be an achievable target throughput 
vector and $\nu_i^{ring}((\theta_1,\ldots,\theta_4))$ the back-off rate needed to achieve the throughput vector
$(\theta_1,\ldots,\theta_4)$ in the ring network consisting of nodes $1$ to $4$ only.
One can show that $\nu_5(\vec \theta) = \theta_5/(1-(\theta_3+\theta_4+\theta_5))$,
$\nu_i(\vec \theta) =  \nu_i^{ring}((\theta_1,\ldots,\theta_4))$ for $i=1,2$ and 
\[ \nu_i(\vec \theta)  = \nu_i^{ring}((\theta_1,\ldots,\theta_4)) \frac{1-(\theta_3+\theta_4)}{1-(\theta_3+\theta_4+\theta_5)},\]
for $i=3$ and $4$, achieves the vector $\vec \theta$. In other words, when we add node $5$ to the ring network, we only need to adapt the rates
of the interfering nodes $3$ and $4$ by the same factor and set the rate of node $5$ as if it belongs to the
complete conflict graph consisting of nodes $3$, $4$ and $5$ only.

\begin{figure}[t]
\begin{center}
\begin{tikzpicture}[scale=0.9]
\tikzstyle{every node}=[circle, draw, fill=black!10,
                        inner sep=0pt, minimum width=10pt]

\node (1) at (1,0) []{$1$};
\node (2) at (2,0) []{$2$};
\node (3) at (3,0) []{$3$};
\node (4) at (4,0) []{$4$};
\node (5) at (5,0) []{$5$};

\draw [] (1) -- (2);
\draw [] (2) -- (3);
\draw [] (3) -- (4);
\draw [] (4) -- (5);
\draw (1) to[out=45,in=135] (4);
\draw (3) to[out=-45,in=-135] (5);

\end{tikzpicture}
\end{center}
\caption{Non-chordal conflict graph of a network consisting of $n=5$ nodes.}
\label{fig:nonchord2}
\end{figure}
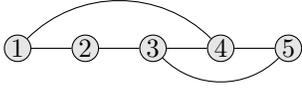

The next theorem shows that this observation holds in general as it indicates that
whenever we add a node labeled  $n+1$ to a network with $n$ nodes and
conflict graph $G$ such that the set of neighbors of $n+1$ form a clique in $G$, it suffices to adapt the back-off
rates of the neighbors of $n+1$ by a constant factor and to set the rate of node $n+1$ in the appropriate manner.
This theorem can also be used to prove the correctness of Algorithm \ref{algo:nui} directly.

\begin{theorem}
Consider a network with $n$ nodes and conflict graph $G=(V,E)$ where the nodes $\{k+1,\ldots,n\}$ form a clique
and assume the back-off rates $(\nu_1,\ldots,\nu_n)$ achieve throughput vector $(\theta_1,\ldots,\theta_n)$.
Let $\tilde G = (V \cup \{n+1\}, E \cup \cup_{i=k+1}^n (i,n+1))$ be the conflict graph of the network 
obtained by adding node $n+1$, the neighbors in $\tilde G$ of which are $\{k+1,\ldots,n\}$. 
Then, for any $\theta_{n+1} < 1-(\theta_{k+1}+\ldots +\theta_n)$, the vector $(\tilde \nu_1,\ldots,
\tilde \nu_{n+1})$ with $\tilde \nu_i= \nu_i$, for $i=1,\ldots,k$, 
\[\tilde \nu_i = \nu_i \frac{1-\sum_{j =k+1}^n \theta_j}{1-\theta_{n+1}-\sum_{j =k+1}^n \theta_j}, \]
for $i=k+1,\ldots,n$ and $\tilde \nu_{n+1} = \theta_{n+1} / (1-\theta_{n+1}-\sum_{j =k+1}^n \theta_j)$
achieves throughput vector $(\theta_1,\ldots,\theta_n,\theta_{n+1})$.
\end{theorem}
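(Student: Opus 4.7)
The plan is to exploit the product-form stationary distribution \eqref{eq:stst} and to compute the new partition function $\tilde Z_{n+1}$ of the extended network directly in terms of the old partition function $Z_n$. The crucial structural fact is that $\{k+1,\ldots,n\}$ is a clique of $G$, so in the state space $\Omega$ at most one of these nodes can be active at a time. I would therefore decompose
\[ Z_n = Y_\emptyset + \sum_{i=k+1}^n \nu_i W_i, \]
where $Y_\emptyset$ collects the product-form weight of all states in which no node of $\{k+1,\ldots,n\}$ is active, and $W_i$ is the partition function of the subnetwork on $\{1,\ldots,k\}\setminus\mathcal{N}_i$ (so that $\nu_i W_i$ is the weight of the states where exactly node $i$ is active among $\{k+1,\ldots,n\}$). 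Translating throughputs into these quantities gives $\theta_i = \nu_i W_i/Z_n$ for $i\in\{k+1,\ldots,n\}$, and hence the useful identity
\[ \frac{Y_\emptyset}{Z_n} = 1-\sum_{j=k+1}^n \theta_j. \]

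Next I would compute $\tilde Z_{n+1}$ by splitting the new state space on the value of $z_{n+1}$. When $z_{n+1}=0$ the constraints are the same as in $G$, but the weight of any state with some $i\in\{k+1,\ldots,n\}$ active picks up the factor $c = (1-\sum_{j=k+1}^n \theta_j)/(1-\theta_{n+1}-\sum_{j=k+1}^n\theta_j)$; when $z_{n+1}=1$ only states contributing to $Y_\emptyset$ survive, multiplied by $\tilde\nu_{n+1}$. A direct check shows $1+\tilde\nu_{n+1}=c$, so
\[ \tilde Z_{n+1} = Y_\emptyset(1+\tilde\nu_{n+1}) + c(Z_n-Y_\emptyset) = c Z_n. \]

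The remainder is bookkeeping. For $i\in\{k+1,\ldots,n\}$, the active state of $i$ forbids $n+1$, so the numerator of $\tilde\theta_i$ is $c\nu_i W_i$ and the factors of $c$ cancel with $\tilde Z_{n+1}$. For $n+1$, the numerator is $Y_\emptyset\tilde\nu_{n+1}$, and using $Y_\emptyset/Z_n=1-\sum_{j=k+1}^n\theta_j$ together with the definition of $c$ and $\tilde\nu_{n+1}$ yields $\tilde\theta_{n+1}=\theta_{n+1}$. The step I would expect to require the most care is the verification for $j\in\{1,\ldots,k\}$, since such a node may or may not conflict with nodes in $\{k+1,\ldots,n\}$. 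Here one splits the weight of states with $z_j=1$ into a part $A_j$ where no node of $\{k+1,\ldots,n\}$ is active and a part $B_j$ where one of them is active, so that $\theta_j=(A_j+B_j)/Z_n$. In the extended network the corresponding numerator becomes $A_j(1+\tilde\nu_{n+1})+cB_j$, and the identity $1+\tilde\nu_{n+1}=c$ collapses this to $c(A_j+B_j)$, after which the factor $c$ cancels with $\tilde Z_{n+1}=cZ_n$ to give $\tilde\theta_j=\theta_j$. This is where the specific form of the multiplicative correction $c$ is essential, and it is what I expect to be the main obstacle to identify cleanly.
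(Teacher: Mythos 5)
Your proposal is correct and follows essentially the same route as the paper's proof: both rest on the identities $1+\tilde\nu_{n+1}=c$ and $\tilde Z_{n+1}=cZ_n$ (obtained from the fact that at most one node of the clique $\{k+1,\ldots,n\}$ can be active, so each affected state picks up exactly one factor $c$), and both verify the three classes of nodes by splitting states according to whether a clique node is active. Your $Y_\emptyset$, $A_j$, $B_j$ bookkeeping is just a notational repackaging of the indicator sums the paper writes out explicitly.
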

\begin{proof}
Let $\Omega_n$ be the state space of the Markov process corresponding to the network with $n$ nodes
and define $Z_{1:j}$ as the normalizing constant of the network with back-off rates
$(\nu_1,\ldots,\nu_n)$ consisting of nodes $1$ to $j$ only and $Z_n = Z_{1:n}$. Similarly define
$\tilde Z_{1:j}$ and $\tilde Z_{n+1}$ for the network consisting of $n+1$ nodes with back-off
rates $(\tilde \nu_1,\ldots,\tilde \nu_{n+1})$. As the nodes $\{k+1,\ldots, n\}$ form a clique
in $G$ we have
\[Z_n = Z_{1:k} + \sum_{i=k+1}^n \left(\sum_{\vec z \in \Omega_n} \prod_{j=1}^n \nu_j^{z_j} 1[z_i=1]\right).\]
As $\tilde \nu_i= \nu_i$ for $i=1,\ldots,k$, we have $Z_{1:k} = \tilde Z_{1:k}$ and 
\[\tilde Z_{n+1} = Z_{1:k} (1+\tilde \nu_{n+1})+ \sum_{i=k+1}^{n} \left(\sum_{\vec z \in \Omega_n} \prod_{j=1}^n \tilde \nu_j^{z_j} 1[z_i=1]\right),\]
as $\{k+1,\ldots, n+1\}$ is a clique in $\tilde G$ and node $n+1$ has no neighbors in $\{1,\ldots,k\}$. 
Further, one easily checks that $(1+\tilde \nu_{n+1}) =  (1-\sum_{j =k+1}^n \theta_j)/(1-\theta_{n+1}-\sum_{j =k+1}^n \theta_j)$
and 
\begin{align}\label{eq:prod}
\prod_{j=1}^n &\tilde \nu_j^{z_j} 1[z_i=1] = \prod_{j=1}^n \nu_j^{z_j} 1[z_i=1] \frac{1-\sum_{j =k+1}^n \theta_j}{1-\theta_{n+1}-\sum_{j =k+1}^n \theta_j}
\end{align}
for $i > k$, as the nodes $\{k+1,\ldots,n\}$ form a clique. This yields 
\begin{align}\label{eq:Zrel}
\tilde Z_{n+1} = Z_n \frac{1-\sum_{j =k+1}^n \theta_j}{1-\theta_{n+1}-\sum_{j =k+1}^n \theta_j} = Z_n (1+\tilde \nu_{n+1}).
\end{align}
Assuming the back-off rates $(\nu_1,\ldots,\nu_n)$ achieve throughput vector $(\theta_1,\ldots,\theta_n)$ in $G$,
we observe
\begin{align}\label{eq:thetai}
\theta_i = \frac{1}{Z_n} \sum_{\vec z \in \Omega_n} \prod_{j=1}^n \nu_j^{z_j} 1[z_i=1].
\end{align}
We now show that the vector $(\tilde \nu_1,\ldots,\tilde \nu_{n+1})$ achieves throughput vector $(\theta_1,\ldots,\theta_{n+1})$ in $\tilde G$.
For $i=1,\ldots,k$ and by separating the terms in which a neighbor of node $n+1$
is active or not, we can express the throughput of node $i$ in $\tilde G$ as
\begin{align*}
\frac{(1+\tilde \nu_{n+1})}{\tilde Z_{n+1}}& \left( \sum_{\vec z \in \Omega_n} \prod_{j=1}^n \nu_j^{z_j} 1[z_i=1,z_{k+1}=\ldots=z_n=0] \right)\\
&+ \frac{1}{\tilde Z_{n+1}}\left( \sum_{j=k+1}^n \sum_{\vec z \in \Omega_n} \prod_{j=1}^n \tilde \nu_j^{z_j} 1[z_i=z_j=1] \right) 
\end{align*} 
Combined with \eqref{eq:Zrel} and \eqref{eq:thetai} this implies that node $i \in \{1,\ldots,k\}$ has throughput $\theta_i$
in $\tilde G$.

For $i \in \{k+1,\ldots,n\}$ we immediately have due to \eqref{eq:prod} and \eqref{eq:Zrel} that the throughput of node $i$ can be written as
\begin{align*}
  \frac{1}{\tilde Z_{n+1}} \sum_{\vec z \in \Omega_n} \prod_{j=1}^n \tilde \nu_j^{z_j} 1[z_i=1] = 
  \frac{1}{Z_{n}} \sum_{\vec z \in \Omega_n} \prod_{j=1}^n \nu_j^{z_j} 1[z_i=1], 
\end{align*}
which equals $\theta_i$. Finally, for $i=n+1$ the throughput is given by
\[
  \frac{1}{\tilde Z_{n+1}} \sum_{\vec z \in \Omega_n} \prod_{j=1}^n \nu_j^{z_j} 1[z_{k+1}=\ldots=z_n=0] \frac{\theta_{n+1}}{1-\sum_{i=k+1}^{n+1} \theta_i} 
\]
and 
\[
  \frac{1}{Z_{n}} \sum_{\vec z \in \Omega_n} \prod_{j=1}^n \nu_j^{z_j} 1[z_{k+1}=\ldots=z_n=0] = 1-\sum_{i=k+1}^n \theta_i,
\]
as $\{k+1,\ldots,n\}$ is a clique.
By \eqref{eq:Zrel} we observe that the expression for the throughput of node $n+1$ in $\tilde G$ simplifies to $\theta_{n+1}$. 
\end{proof}

\section{Related work}\label{sec:related}
The well-known product form to analyze the throughput of idealized CSMA/CA networks with a general conflict graph
was first introduced in \cite{boorstyn1}, where its insensitivity with respect to the packet length distribution was also
shown. Insensitivity with respect to the length of the back-off period was proven much later in \cite{vandeven3,liew2}.

In \cite{durvy2} the fairness of large CSMA/CA networks was studied and for regular networks (lines and grids)
conditions on when the unfairness propagates within the network were presented. 
The cause of unfairness in CSMA/CA networks was further analyzed in \cite{durvy1}, where the equality of 
the receiving and sensing ranges was identified as an important cause.  
Fairness in CSMA/CA line networks was also studied in \cite{vandeven1}, where an explicit formula was presented
to achieve fairness in a line network consisting of $n$ nodes, where each node interferes with the next and previous $\beta$
nodes. The existence of a unique vector of back-off rates to achieve any achievable throughput vector was established
in \cite{vandeven5}, which also discusses several iterative algorithms to compute this vector. 

In \cite{jiang1,jiang2} the set of achievable throughput vectors of an ideal CSMA/CA network was identified and a dynamic
algorithm to set the back-off rates was proposed that was proven to be throughput-optimal. Generalizations of this algorithm in a
setting with packet collisions were considered in \cite{jiang3}. An simple approximate algorithm to set the back-off rates 
to achieve a given target throughput vector that requires only a single iteration was presented in \cite{yun1}.

Several generalization of the ideal CSMA/CA model have been studied. These include linear networks with
hidden and exposed nodes \cite{vandeven6}, single and multihop networks with unsaturated users \cite{laufer1}
and networks relying on multiple channels \cite{bonald2,liew1}, where it should be noted that the stability conditions
for the unsaturated network presented in \cite{laufer1} are not valid in general \cite{cecchi1}.

Another line of related work, inspired by maximum weight scheduling \cite{tassiulas1},
considers adapting the transmission lengths based on the current queue length of a node \cite{rajagopalan1,ghaderi1},
that is, backoff periods and packets have mean length $1$ and after 
a backoff period or packet transmission a node transmits a packet with some
probability that depends on a weight function of its queue length 
(provided that its neighbors are sensed silent). The main observation was
that a slowly changing weight function is necessary for stability and among such functions a slower function
leads to a more stable network at the cost of increased queue-sizes (and delay).
Other queue-length based CSMA/CA algorithms that were shown to be throughput-optimal in some setting include
\cite{marbach1,ni1}.

While throughput-optimality is very desirable, some of these queue-based algorithms have poor delay characteristics \cite{bouman1}, 
which resulted in the design of (order) delay optimal CSMA algorithms \cite{shah1,lotfinezhad1}. 
Finally, there is also a large body of work on CSMA/CA networks in the context of 802.11 networks that
was initiated by the seminal work in \cite{bianchi1}, which we do not discuss here.

\section{Conclusions}\label{sec:conclusion}
In this paper we presented closed-form expressions for the back-off vector needed to achieve
any achievable throughput vector in an ideal CSMA/CA network with a chordal conflict graph. 
These expressions are such that the back-off rate of a node only depends on its own target throughout 
and the target throughput of its neighbors. We further indicated that expressions of this type cannot
be obtained even for the simplest non-chordal graph (that is, a ring of size $4$).

We also briefly explored the possibility of defining chordal approximations for the back-off
rates for a general conflict graph. To this end we introduced a distributed approximation algorithm called
the local chordal subgraph approximation and observed that it provides more accurate results
than the Bethe approximation proposed in \cite{yun1}.
We note that the Bethe approximation is only one of many free energy approximation studied in
statistical physics. Ongoing work includes developing more complex free energy approximations for CSMA networks
and studying their relation with the results presented in this paper. 

\bibliographystyle{IEEEtran}
\bibliography{../../PhD/thesis}

\begin{thebibliography}{10}
\providecommand{\url}[1]{#1}
\csname url@samestyle\endcsname
\providecommand{\newblock}{\relax}
\providecommand{\bibinfo}[2]{#2}
\providecommand{\BIBentrySTDinterwordspacing}{\spaceskip=0pt\relax}
\providecommand{\BIBentryALTinterwordstretchfactor}{4}
\providecommand{\BIBentryALTinterwordspacing}{\spaceskip=\fontdimen2\font plus
\BIBentryALTinterwordstretchfactor\fontdimen3\font minus
  \fontdimen4\font\relax}
\providecommand{\BIBforeignlanguage}[2]{{%
\expandafter\ifx\csname l@#1\endcsname\relax
\typeout{** WARNING: IEEEtran.bst: No hyphenation pattern has been}%
\typeout{** loaded for the language `#1'. Using the pattern for}%
\typeout{** the default language instead.}%
\else
\language=\csname l@#1\endcsname
\fi
#2}}
\providecommand{\BIBdecl}{\relax}
\BIBdecl

\bibitem{boorstyn1}
R.~Boorstyn, A.~Kershenbaum, B.~Maglaris, and V.~Sahin, ``Throughput analysis
  in multihop {CSMA} packet radio networks,'' \emph{IEEE Transactions on
  Communications}, vol.~35, no.~3, pp. 267--274, 1987.

\bibitem{durvy2}
\BIBentryALTinterwordspacing
M.~Durvy, O.~Dousse, and P.~Thiran, ``On the fairness of large {CSMA}
  networks,'' \emph{{IEEE} Journal on Selected Areas in Communications},
  vol.~27, no.~7, pp. 1093--1104, 2009. [Online]. Available:
  \url{http://dx.doi.org/10.1109/JSAC.2009.090907}
\BIBentrySTDinterwordspacing

\bibitem{durvy1}
\BIBentryALTinterwordspacing
------, ``Self-organization properties of {CSMA/CA} systems and their
  consequences on fairness,'' \emph{{IEEE} Transactions on Information Theory},
  vol.~55, no.~3, pp. 931--943, 2009. [Online]. Available:
  \url{http://dx.doi.org/10.1109/TIT.2008.2011427}
\BIBentrySTDinterwordspacing

\bibitem{jiang2}
L.~Jiang, D.~Shah, J.~Shin, and J.~Walrand, ``Distributed random access
  algorithm: scheduling and congestion control,'' \emph{IEEE Trans. Inform.
  Theory}, vol.~56, no.~12, pp. 6182--6207, 2010.

\bibitem{jiang1}
\BIBentryALTinterwordspacing
L.~Jiang and J.~Walrand, ``A distributed {CSMA} algorithm for throughput and
  utility maximization in wireless networks,'' \emph{IEEE/ACM Transactions on
  Networking}, vol.~18, no.~3, pp. 960--972, Jun. 2010. [Online]. Available:
  \url{http://dx.doi.org/10.1109/TNET.2009.2035046}
\BIBentrySTDinterwordspacing

\bibitem{vandeven3}
\BIBentryALTinterwordspacing
P.~M. van~de Ven, S.~C. Borst, J.~S.~H. van Leeuwaarden, and A.~Prouti\`{e}re,
  ``Insensitivity and stability of random-access networks,'' \emph{Perform.
  Eval.}, vol.~67, no.~11, pp. 1230--1242, Nov. 2010. [Online]. Available:
  \url{http://dx.doi.org/10.1016/j.peva.2010.08.011}
\BIBentrySTDinterwordspacing

\bibitem{vandeven5}
\BIBentryALTinterwordspacing
P.~M. van~de Ven, A.~J. E.~M. Janssen, J.~S.~H. van Leeuwaarden, and S.~C.
  Borst, ``Achieving target throughputs in random-access networks,''
  \emph{Perform. Eval.}, vol.~68, no.~11, pp. 1103--1117, Nov. 2011. [Online].
  Available: \url{http://dx.doi.org/10.1016/j.peva.2011.07.019}
\BIBentrySTDinterwordspacing

\bibitem{vandeven1}
\BIBentryALTinterwordspacing
P.~M. van~de Ven, J.~S.~H. van Leeuwaarden, D.~Denteneer, and A.~J. E.~M.
  Janssen, ``Spatial fairness in linear random-access networks,''
  \emph{Perform. Eval.}, vol.~69, no. 3-4, pp. 121--134, Mar. 2012. [Online].
  Available: \url{http://dx.doi.org/10.1016/j.peva.2010.09.003}
\BIBentrySTDinterwordspacing

\bibitem{wang5}
X.~Wang and K.~Kar, ``Throughput modeling and fairness issues in {CSMA/CA}
  based ad-hoc networks,'' in \emph{Proc. of IEEE INFOCOM}, 2005.

\bibitem{yun1}
\BIBentryALTinterwordspacing
S.~Yun, J.~Shin, and Y.~Yi, ``{CSMA} using the {Bethe} approximation:
  scheduling and utility maximization,'' \emph{IEEE Transactions on Information
  Theory}, vol.~61, no.~9, pp. 4776--4787, 2015. [Online]. Available:
  \url{http://arxiv.org/pdf/1306.1076v3.pdf}
\BIBentrySTDinterwordspacing

\bibitem{kelly2}
\BIBentryALTinterwordspacing
F.~P. Kelly, ``\BIBforeignlanguage{English}{Stochastic models of computer
  communication systems},'' \emph{\BIBforeignlanguage{English}{Journal of the
  Royal Statistical Society. Series B (Methodological)}}, vol.~47, no.~3, pp.
  pp. 379--395, 1985. [Online]. Available:
  \url{http://www.jstor.org/stable/2345773}
\BIBentrySTDinterwordspacing

\bibitem{blair1}
J.~Blair and B.~W. Peyton, ``\BIBforeignlanguage{English}{An introduction to
  chordal graphs and clique trees},'' in
  \emph{\BIBforeignlanguage{English}{Graph Theory and Sparse Matrix
  Computation}}, ser. The IMA Volumes in Mathematics and its
  Applications.\hskip 1em plus 0.5em minus 0.4em\relax Springer New York, 1993,
  vol.~56, pp. 1--29.

\bibitem{wood1}
D.~R. Wood, ``On the number of maximal independent sets in a graph,''
  \emph{CoRR abs/1104.1243}, 2011.

\bibitem{rose1}
\BIBentryALTinterwordspacing
D.~J. Rose, R.~E. Tarjan, and G.~S. Lueker, ``Algorithmic aspects of vertex
  elimination on graphs,'' \emph{{SIAM} J. Comput.}, vol.~5, no.~2, pp.
  266--283, 1976. [Online]. Available: \url{http://dx.doi.org/10.1137/0205021}
\BIBentrySTDinterwordspacing

\bibitem{tarjan1}
R.~Tarjan, ``Maximum cardinality search and chordal graphs.'' Unpublished
  Lecture Notes CS 259, 1976.

\bibitem{yannakakis1}
M.~Yannakakis, ``Computing the minimum fill-in is {NP}-complete,'' \emph{SIAM
  Journal on Algebraic Discrete Methods}, vol.~2, no.~1, pp. 77--79, 1981.

\bibitem{heggernes1}
\BIBentryALTinterwordspacing
P.~Heggernes, ``Minimal triangulations of graphs: A survey,'' \emph{Discrete
  Math.}, vol. 306, no.~3, pp. 297--317, Feb. 2006. [Online]. Available:
  \url{http://dx.doi.org/10.1016/j.disc.2005.12.003}
\BIBentrySTDinterwordspacing

\bibitem{blair2}
\BIBentryALTinterwordspacing
J.~Blair, P.~Heggernes, and J.~Telle, ``A practical algorithm for making filled
  graphs minimal,'' \emph{Theoretical Computer Science}, vol. 250, no. 1-2, pp.
  125 -- 141, 2001. [Online]. Available:
  \url{http://www.sciencedirect.com/science/article/pii/S0304397599001267}
\BIBentrySTDinterwordspacing

\bibitem{dearing1}
\BIBentryALTinterwordspacing
P.~Dearing, D.~Shier, and D.~Warner, ``Maximal chordal subgraphs,''
  \emph{Discrete Applied Mathematics}, vol.~20, no.~3, pp. 181 -- 190, 1988.
  [Online]. Available:
  \url{http://www.sciencedirect.com/science/article/pii/0166218X88900753}
\BIBentrySTDinterwordspacing

\bibitem{liew2}
L.~S.C., K.~C.H., L.~H.C., and W.~P., ``Back-of-the-envelope computation of
  throughput distributions in {CSMA} wireless networks,'' \emph{IEEE
  Transactions on Mobile Computing}, vol.~9, no.~9, pp. 1319--1331, 2010.

\bibitem{jiang3}
\BIBentryALTinterwordspacing
L.~Jiang and J.~Walrand, ``Approaching throughput-optimality in distributed
  {CSMA} scheduling algorithms with collisions,'' \emph{IEEE/ACM Trans. Netw.},
  vol.~19, no.~3, pp. 816--829, Jun. 2011. [Online]. Available:
  \url{http://dx.doi.org/10.1109/TNET.2010.2089804}
\BIBentrySTDinterwordspacing

\bibitem{vandeven6}
\BIBentryALTinterwordspacing
P.~M. Van De~Ven, A.~J. E.~M. Janssen, and J.~S.~H. Van~Leeuwaarden,
  ``Balancing exposed and hidden nodes in linear wireless networks,''
  \emph{IEEE/ACM Trans. Netw.}, vol.~22, no.~5, pp. 1429--1443, Oct. 2014.
  [Online]. Available: \url{http://dx.doi.org/10.1109/TNET.2013.2277654}
\BIBentrySTDinterwordspacing

\bibitem{laufer1}
R.~Laufer and L.~Kleinrock, ``The capacity of wireless {CSMA/CA} multi-hop
  networks,'' in \emph{Proc. of IEEE INFOCOM}, 2013.

\bibitem{bonald2}
\BIBentryALTinterwordspacing
T.~Bonald and M.~Feuillet, ``\BIBforeignlanguage{English}{Performance of {CSMA}
  in multi-channel wireless networks},''
  \emph{\BIBforeignlanguage{English}{Queueing Systems}}, vol.~72, no. 1-2, pp.
  139--160, 2012. [Online]. Available:
  \url{http://dx.doi.org/10.1007/s11134-012-9300-3}
\BIBentrySTDinterwordspacing

\bibitem{liew1}
S.~C. Liew, J.~Zhang, C.-K. Chau, and M.~Chen, ``Analysis of frequency-agile
  {CSMA} wireless networks,'' \emph{CoRR}, vol. abs/1007.5255, 2010.

\bibitem{cecchi1}
F.~Cecchi, S.~Borst, and J.~van Leeuwaarden, ``Throughput of {CSMA} networks
  with buffer dynamics,'' \emph{Performance Evaluation}, vol.~79, pp. 216 --
  234, 2014.

\bibitem{tassiulas1}
L.~Tassiulas and A.~Ephremides, ``Stability properties of constrained queueing
  systems and scheduling policies for maximum throughput in multihop radio
  networks,'' \emph{IEEE Trans. Autom. Control}, vol.~37, no.~12, pp.
  1936--1948, 1992.

\bibitem{rajagopalan1}
\BIBentryALTinterwordspacing
S.~Rajagopalan, D.~Shah, and J.~Shin, ``Network adiabatic theorem: An efficient
  randomized protocol for contention resolution,'' in \emph{SIGMETRICS
  '09}.\hskip 1em plus 0.5em minus 0.4em\relax New York, NY, USA: ACM, 2009,
  pp. 133--144. [Online]. Available:
  \url{http://doi.acm.org/10.1145/1555349.1555365}
\BIBentrySTDinterwordspacing

\bibitem{ghaderi1}
J.~Ghaderi, S.~Borst, and P.~Whiting, ``Queue-based random-access algorithms:
  Fluid limits and stability issues,'' \emph{Stochastic Systems}, vol.~4, pp.
  81--156, 2014.

\bibitem{marbach1}
P.~Marbach and A.~Eryilmaz, ``A backlog-based {CSMA}-mechanism to achieve
  fairness and throughput-optimality in multihop wireless networks,'' in
  \emph{Proceedings of the 46th Annual Allerton Conference on Communication,
  Control and Computing}, 2008.

\bibitem{ni1}
\BIBentryALTinterwordspacing
J.~Ni, B.~Tan, and R.~Srikant, ``{Q-CSMA}: Queue-length-based {CSMA/CA}
  algorithms for achieving maximum throughput and low delay in wireless
  networks,'' \emph{IEEE/ACM Trans. Netw.}, vol.~20, no.~3, pp. 825--836, Jun.
  2012. [Online]. Available: \url{http://dx.doi.org/10.1109/TNET.2011.2177101}
\BIBentrySTDinterwordspacing

\bibitem{bouman1}
N.~Bouman, ``Queue-based random access in wireless networks,'' Ph.D.
  dissertation, Technical University Eindhoven, 2013.

\bibitem{shah1}
\BIBentryALTinterwordspacing
D.~Shah and J.~Shin, ``Delay optimal queue-based {CSMA},'' in \emph{SIGMETRICS
  '10}.\hskip 1em plus 0.5em minus 0.4em\relax New York, NY, USA: ACM, 2010,
  pp. 373--374. [Online]. Available:
  \url{http://doi.acm.org/10.1145/1811039.1811093}
\BIBentrySTDinterwordspacing

\bibitem{lotfinezhad1}
M.~Lotfinezhad and P.~Marbach, ``Throughput-optimal random access with
  order-optimal delay,'' in \emph{Proc. of IEEE INFOCOM}, 2011.

\bibitem{bianchi1}
G.~Bianchi, ``Performance analysis of the {IEEE 802.11} distributed
  coordination function,'' \emph{IEEE Journal on Selected Area in Comm.},
  vol.~18, no.~3, Mar 2000.

\end{thebibliography}




\appendices

\section{Proof of Theorem~\ref{TH:TREE}}\label{apx:proof_tree}
We first establish the following lemma:

\begin{lemma}\label{lem:Ztree}
Let $I \subset V(G)$ and $E(I) = \{(k,j)\in E(G) |$ $k,j \in I\}$. Further assume $(I,E(I))$ is connected and
there is exactly one edge $(l,m) \in E(G)$ such that $l \in I$ and $m \not\in I$. Let $Z^{tree}_I(\vec \theta)$ be the normalizing constant of
the network with conflict graph $(I,E(I))$ and $Z^{tree}_n(\vec \theta)$ be the normalizing constant of the entire network  when the rates 
are set according to \eqref{eq:nui_tree}.
Then,
\begin{align}\label{eq:ZI}
Z^{tree}_I(\vec \theta) &= \frac{\prod_{j\in I} (1-\theta_{j})^{|\mathcal{N}_j|-1}}{\prod_{(k,j) \in E(I)} (1-(\theta_k+\theta_{j}))}
\frac{(1-\theta_m)}{(1-(\theta_l+\theta_m))},\\
Z^{tree}_n(\vec \theta) &= \frac{\prod_{j=1}^n (1-\theta_{j})^{|\mathcal{N}_j|-1}}{\prod_{(k,j) \in E} (1-(\theta_k+\theta_{j}))}.
\label{eq:Zn_tree}
\end{align} 
where $\mathcal{N}_i$ is the set of neighbors of $i$ in $G$.
\end{lemma}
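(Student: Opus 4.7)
The plan is to prove both identities simultaneously by a bottom-up induction on subtrees of $(I,E(I))$, using the standard dynamic-programming recursion for product-form measures on trees. I would root $(I,E(I))$ at $l$; for every $v\in I$, let $I_v$ be the vertex set of the rooted subtree at $v$ and let $p_v$ be the parent of $v$, with the convention $p_l=m$. Since $G$ is a tree and $l$ is the only node of $I$ adjacent to $V(G)\setminus I$, the $G$-neighbors of any $v\in I$ are exactly $\{p_v\}$ together with the children of $v$ in the rooted tree, so $|\mathcal{N}_v|-1$ equals the number of children of $v$. This simple bookkeeping identity is what drives the whole argument.

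Define $f_v^0$ and $f_v^1$ as the partial normalizing constants over $I_v$ restricted to $z_v=0$ and $z_v=1$, so that $Z^{tree}_{I_v}(\vec\theta)=f_v^0+f_v^1$ and the usual DP recursion gives $f_v^0=\prod_c(f_c^0+f_c^1)$ and $f_v^1=\nu_v^{tree}(\vec\theta)\prod_c f_c^0$, where $c$ ranges over the children of $v$. The inductive claim is
\[
f_v^0=\frac{\prod_{j\in I_v}(1-\theta_j)^{|\mathcal{N}_j|-1}}{\prod_{(k,j)\in E(I_v)}(1-\theta_k-\theta_j)},\qquad \frac{f_v^1}{f_v^0}=\frac{\theta_v}{1-\theta_v-\theta_{p_v}}.
\]
The base case ($v$ a leaf of the rooted tree) is a one-line check. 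For the inductive step, the hypothesis on each child yields $f_c^0+f_c^1=f_c^0\cdot(1-\theta_v)/(1-\theta_c-\theta_v)$; substituting into the recursion for $f_v^0$ produces $|\mathcal{N}_v|-1$ factors of $(1-\theta_v)$ in the numerator (one per child) together with the new edge factors $(1-\theta_v-\theta_c)$ in the denominator, matching the claimed formula. The ratio $f_v^1/f_v^0$ then follows from the explicit form of $\nu_v^{tree}(\vec\theta)$: its denominator $\prod_{j\in\mathcal{N}_v}(1-\theta_v-\theta_j)$ splits as $(1-\theta_v-\theta_{p_v})\prod_c(1-\theta_v-\theta_c)$, and the latter product cancels against the `extra' edge factors in $f_v^0$.

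Specializing to $v=l$ (with $p_l=m$) immediately gives \eqref{eq:ZI}, since $f_l^0+f_l^1=f_l^0\cdot(1-\theta_m)/(1-\theta_l-\theta_m)$. For \eqref{eq:Zn_tree} I would root $G$ at an arbitrary node $r$; each subtree rooted at a child of $r$ fits the lemma's hypothesis with external edge into $r$, so the formulas for $f_c^0$ and $f_c^1$ established above apply verbatim. Computing $Z_n^{tree}(\vec\theta)=f_r^0+f_r^1$ and using $(1-\theta_r)+\theta_r=1$ to collapse the bracketed factor yields \eqref{eq:Zn_tree}. I expect the only real obstacle to be the bookkeeping: making sure the exponent $|\mathcal{N}_j|-1$ and the edge factors line up correctly across interior nodes, the boundary node $l$ (whose extra neighbor is the external $m$), and, for the second identity, the chosen root $r$ (whose neighbors are all children and so contributes exponent $|\mathcal{N}_r|-1$ only after the $(1-\theta_r)+\theta_r=1$ simplification).
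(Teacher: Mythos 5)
Your proof is correct, and it is the same basic strategy as the paper's --- an inductive evaluation of the partition function over subtrees, conditioning on whether the node incident to the removed edge is active --- but the decomposition is organized differently. The paper works only with \emph{unconditional} normalizing constants $Z^{tree}_J$ of components $J$ satisfying the lemma's hypothesis, and proves \eqref{eq:ZI} by induction on $|I|$ via the two-level identity $Z^{tree}_I = \prod_s Z^{tree}_{I(l_s)} + \nu_l^{tree}(\vec\theta)\prod_s\prod_t Z^{tree}_{I(l_{s,t})}$, where the $I(l_s)$ are the components hanging off the children of $l$ and the $I(l_{s,t})$ those hanging off the grandchildren; the descent to grandchildren is forced because activating $l$ blocks its children and the induction hypothesis only supplies unconditional constants. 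You instead strengthen the inductive hypothesis to track the conditional quantities $f_v^0$ and $f_v^1$ (equivalently $f_v^0$ together with the ``message'' $f_v^1/f_v^0=\theta_v/(1-\theta_v-\theta_{p_v})$), which turns the computation into the standard one-level sum--product recursion and avoids the grandchildren entirely. The two arguments hinge on exactly the same cancellation: your identity $f_c^0+f_c^1=f_c^0\,(1-\theta_{p_c})/(1-\theta_c-\theta_{p_c})$ is the factor $(1-\theta_l)/(1-(\theta_l+\theta_{l_s}))$ appearing in the paper's expression for $Z^{tree}_{I(l_s)}$. Your version is arguably cleaner bookkeeping and makes the belief-propagation structure (and hence the Bethe interpretation of \eqref{eq:nui_tree} noted after Theorem~\ref{TH:TREE}) explicit; the paper's version avoids introducing conditional partition functions. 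Your treatment of \eqref{eq:Zn_tree} by rooting at an arbitrary $r$ and using $(1-\theta_r)+\theta_r=1$ is also a touch more symmetric than the paper's, which assumes WLOG that node $n$ is a leaf and peels it off; both are valid.
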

\begin{proof}
We prove the expression for $Z^{tree}_I(\vec \theta)$ using induction on $|I|$. For $|I|=1$, we have $I = \{l\}$ 
and due to the assumption on $I$, $l$ has only one neighbor $m$ in $E$. Thus $Z^{tree}_I = (1+\nu_l^{tree}(\vec \theta)) = (1-\theta_m)/(1-(\theta_l+\theta_m))$ as
required (as the empty product is defined as $1$).

For $|I|> 1$ let $l_1,\ldots,l_{|\mathcal{N}_l|-1}$ be the neighbors of $l$ excluding $m$ and 
let $l_{s,1},\ldots,l_{s,|\mathcal{N}_{l_s}|-1}$ be the
neighbors of $l_s$ excluding $l$, for $s=1,\ldots ,|\mathcal{N}_l|-1$ (see Figure \ref{fig:tree}). Note due to the assumption on $I$ the nodes $l_s$ and $l_{s,t}$ are
all part of $I$. Define $I(l_s)$ as the connected component containing $l_s$ after removing the edge $(l,l_s)$ from $G$ and
let $I(l_{s,t})$ be the connected component containing $l_{s,t}$ after removing the edge $(l_s,l_{s,t})$ from $G$.
At this point we note that \eqref{eq:ZI} applies to the sets $I(l_s)$ and $I(l_{s,t})$ by induction as they are smaller than $I$, connected and
only one edge exists from the set. Hence, by separating the terms with $l$ inactive and active we have
\begin{align}\label{eq:Ztree1}
\MoveEqLeft Z^{tree}_I(\vec \theta) = \prod_{s=1}^{|\mathcal{N}_l|-1} Z^{tree}_{I(l_s)}(\vec \theta) + 
\nu_l^{tree}(\vec \theta) \prod_{s=1}^{|\mathcal{N}_l|-1} \prod_{t=1}^{|\mathcal{N}_{l_s}|-1} Z^{tree}_{I(l_{s,t})}(\vec \theta).
\end{align}
By induction, we have
\begin{align*}
\MoveEqLeft \prod_{s=1}^{|\mathcal{N}_l|-1} Z^{tree}_{I(l_s)}(\vec \theta) \\
&= \prod_{s=1}^{|\mathcal{N}_l|-1} \frac{\prod_{j\in I(l_s)} (1-\theta_{j})^{|\mathcal{N}_j|-1}}{\prod_{(k,j) \in E(I(l_s))} (1-(\theta_k+\theta_{j}))}
\frac{(1-\theta_l)}{(1-(\theta_l+\theta_{l_s}))}\\
&=\frac{\prod_{j\in I} (1-\theta_{j})^{|\mathcal{N}_j|-1}}{\prod_{(k,j) \in E(I)}(1-(\theta_k+\theta_{j}))}, 
\end{align*}
and similarly one finds
\begin{align*}
 \prod_{s=1}^{|\mathcal{N}_l|-1} \prod_{t=1}^{|\mathcal{N}_{l_s}|-1} &Z^{tree}_{I(l_{s,t})}(\vec \theta) \\
&=\frac{\prod_{j\in I,j\not=l} (1-\theta_{j})^{|\mathcal{N}_j|-1}}{\prod_{(k,j) \in E(I), k,j\not=l} (1-(\theta_k+\theta_{j}))}. 
\end{align*}
Therefore, by \eqref{eq:Ztree1}
\begin{align*}
Z^{tree}_I(\vec \theta) &=  \frac{\prod_{j\in I} (1-\theta_{j})^{|\mathcal{N}_j|-1}}{\prod_{(k,j) \in E(I)}(1-(\theta_k+\theta_{j}))} \\
&\hspace*{-1cm} 
+\frac{\theta_l (1-\theta_l)^{|\mathcal{N}_l|-1}}{\prod_{j \in \mathcal{N}_l} (1-(\theta_l+\theta_j))}
\frac{\prod_{j\in I,j\not=l} (1-\theta_{j})^{|\mathcal{N}_j|-1}}{\prod_{(k,j) \in E(I), k,j\not=l} (1-(\theta_k+\theta_{j}))}\\
&\hspace*{-1cm}= \frac{\prod_{j\in I} (1-\theta_{j})^{|\mathcal{N}_j|-1}}{\prod_{(k,j) \in E(I)}(1-(\theta_k+\theta_{j}))} \left(1+\frac{\theta_l}{1-(\theta_l+\theta_m)}\right).
\end{align*}
To prove the expression for $Z^{tree}_n(\vec \theta)$ assume without loss of generality that $n$ has only one neighbor, say $l$. Define $I = \{1,\ldots,n-1\}$,
let $l_1,\ldots,l_{|\mathcal{N}_l|-1}$ be the neighbors of $l$ excluding $n$ and $I(l_s)$ as the connected component containing $l_s$ 
after removing the edge $(l,l_s)$ from $G$. Then by separating the terms with node $n$ inactive and active and
relying on the expression for $Z^{tree}_I(\vec \theta)$, we have
\begin{align*}
Z^{tree}_n(\vec \theta) &= Z^{tree}_{I}(\vec \theta) + 
\nu_n^{tree}(\vec \theta) \prod_{s=1}^{|\mathcal{N}_l|-1} Z^{tree}_{I(l_s)}(\vec \theta)\\
&= \frac{\prod_{j=1}^{n-1} (1-\theta_{j})^{|\mathcal{N}_j|-1} }
{\prod_{(k,j) \in E(I)}(1-(\theta_k+\theta_{j}))}\frac{(1-\theta_n)}{(1-(\theta_l+\theta_n))}\\
&+ \frac{\theta_n}{1-(\theta_l+\theta_n)}
\frac{\prod_{j=1}^{n-1} (1-\theta_{j})^{|\mathcal{N}_j|-1}}{\prod_{(k,j) \in E(I)} (1-(\theta_k+\theta_{j}))}\\
&= \frac{\prod_{j=1}^{n} (1-\theta_{j})^{|\mathcal{N}_j|-1} }
{\prod_{(k,j) \in E}(1-(\theta_k+\theta_{j}))} ((1-\theta_n)+\theta_n),
\end{align*}
as $|\mathcal{N}_n|=1$ and $E = E(I) \cup \{(l,n)\}$.
\end{proof}

\begin{figure}[t]
\begin{center}
\begin{tikzpicture}[scale=0.9]

\draw [dashed] (-3.5,-0.5) -- (5,-0.5);

\node (a) at (-3,-1) {$I$};
\node (m) at (0,0) {$m$};
\node (l) at (0,-1) {$l$};
\node (l2) at (0.5,-2)  {$l_2$};
\node (l1) at (-2,-2) {$l_1$};
\node (ldots) at (2,-2) {$\ldots$};
\node (lf) at (4,-2) {$l_k$};

\node (l11) at (-3,-3.3) {$l_{1,1}$};
\node (l12) at (-2.4,-3.3) {$l_{1,2}$};
\node (l1dots) at (-1.8,-3.3){$\ldots$};
\node (l1f) at (-1.2,-3.3){$l_{1,c_1}$};

\node (l21) at (-0.5,-3.3) {$l_{2,1}$};
\node (l22) at (0.1,-3.3) {$l_{2,2}$};
\node (l2dots) at (0.7,-3.3){$\ldots$};
\node (l2f) at (1.3,-3.3){$l_{2,c_2}$};

\node (lf1) at (3,-3.3) {$l_{k,1}$};
\node (lf2) at (3.6,-3.3) {$l_{k,2}$};
\node (lfdots) at (4.2,-3.3){$\ldots$};
\node (lff) at (4.8,-3.3){$l_{k,c_k}$};

\draw [] (l) -- (m);
\draw [] (l) -- (l1);
\draw [] (l) -- (l2);
\draw [] (l) -- (lf);

\draw [] (l1) -- (l11);
\draw [] (l1) -- (l12);
\draw [] (l1) -- (l1f);

\draw [] (l2) -- (l21);
\draw [] (l2) -- (l22);
\draw [] (l2) -- (l2f);

\draw [] (lf) -- (lf1);
\draw [] (lf) -- (lf2);
\draw [] (lf) -- (lff);

\end{tikzpicture}
\end{center}
\caption{$k=|\mathcal{N}_l|-1$ and $c_s = |\mathcal{N}_{l_s}|-1$ for $s=1,\ldots,k$.}
\label{fig:tree}
\end{figure}
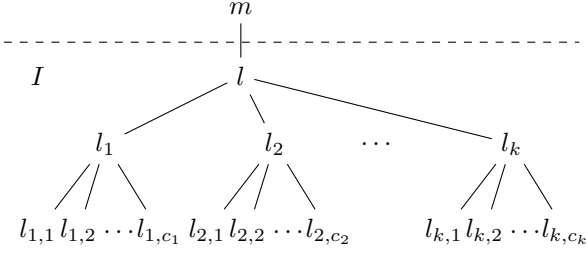

Theorem \ref{TH:TREE} can now be proven as follows. 
Let $i_1,\ldots,i_{|\mathcal{N}_i|}$ be the neighbors of $i$ and let $i_{s,1},$ $\ldots,$ $i_{s,|\mathcal{N}_{i_s}|-1}$ be the
neighbors of $i_s$ excluding $i$, for $s=1,\ldots,|\mathcal{N}_i|$. Let $I(i_{s,t})$ be the connected component containing
$i_{s,t}$ when the edge $(i_s,i_{s,t})$ is removed from $G$.
Due to the product form in \eqref{eq:stst} and Lemma \ref{lem:Ztree} the throughput of node $i$ can be written as
\begin{align*}
\frac{1}{Z^{tree}_n} & \nu_i^{tree}(\vec \theta) \prod_{s=1}^{|\mathcal{N}_i|} \prod_{t=1}^{|\mathcal{N}_{i_s}|-1} Z^{tree}_{I(i_{s,t})}(\vec \theta)\\
&= \frac{1}{Z^{tree}_n} \frac{\theta_i (1-\theta_i)^{|\mathcal{N}_i|-1}}{\prod_{j \in \mathcal{N}_i} (1-(\theta_i+\theta_j))}\\
& \hspace*{1cm} \times \frac{\prod_{j\in I,j\not=i} (1-\theta_{j})^{|\mathcal{N}_j|-1}}{\prod_{(k,j) \in E, k,j\not=i} (1-(\theta_k+\theta_{j}))}
= \theta_i. 
\end{align*} 
The uniqueness of the vector $(\nu_1^{tree}(\vec \theta),\ldots,\nu_n^{tree}(\vec \theta))$ is due to Theorem 2 in \cite{vandeven5}.

\section{Proof of Theorem~\ref{TH:LINE}}\label{apx:proof_line}

First note that the requirement $T < 1$ guarantees that $\nu_i^{line}(\vec \theta)$ is positive for $i=1,\ldots,n$.
We start with the following lemma:
\begin{lemma}\label{lem:Zline}
Let $Z_{i:j}^{line}(\vec \theta)$ be the normalizing constant for the line network consisting of nodes $i$ to $j$ only 
given that the back-off rate of node $i$ is set according to \eqref{eq:nui_line} for $i=1,\ldots,n$ and let
$Z_n^{line}(\vec \theta)=Z_{1:n}^{line}(\vec \theta)$. Then we have
for $i < n - \beta$ and $k > \beta$:
\begin{align}\label{eq:Zi}
Z_{1:i}^{line}(\vec \theta) &= \left.\prod_{j=\beta+1}^{i+\beta} g_{j-\beta+1:j}(\vec \theta)\middle/ 
\prod_{j=\beta+1}^{i+\beta} g_{j-\beta:j}(\vec \theta) \right.,\\
Z_{k:n}^{line}(\vec \theta) &= \left.\prod_{j=k-1}^{n-1} g_{j-\beta+1:j}(\vec \theta) \middle/ \prod_{j=k}^{n}g_{j-\beta:j}(\vec \theta)\right.,
\label{eq:Zk} \\
Z_n^{line}(\vec \theta) &= \left.\prod_{j=\beta+1}^{n-1}  g_{j-\beta+1:j}(\vec \theta) \middle/ \prod_{j=\beta+1}^{n} g_{j-\beta:j}(\vec \theta) \right.,\label{eq:Znorm},
\end{align} 
where $g_{j:k}(\vec \theta) = 1-(\theta_{j}+\ldots+\theta_{k})$, for $j \leq k$.
\end{lemma}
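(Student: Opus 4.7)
The plan is to prove the three identities by induction using recurrences that come from partitioning the state space based on the activity of a boundary node.

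First, in the sub-network on nodes $\{1,\ldots,i\}$, node $i$'s only conflicting neighbors are $\{i-\beta,\ldots,i-1\}$; hence conditioning on $z_i \in \{0,1\}$ yields
\[Z_{1:i}^{line}(\vec\theta) = Z_{1:i-1}^{line}(\vec\theta) + \nu_i^{line}(\vec\theta)\,Z_{1:i-\beta-1}^{line}(\vec\theta),\]
with the convention $Z_{1:j}^{line}(\vec\theta)=1$ for $j\le 0$ (which is consistent with interpreting empty products in \eqref{eq:Zi} as $1$). A symmetric argument on node $k$ in the sub-network $\{k,\ldots,n\}$ gives
\[Z_{k:n}^{line}(\vec\theta) = Z_{k+1:n}^{line}(\vec\theta) + \nu_k^{line}(\vec\theta)\,Z_{k+\beta+1:n}^{line}(\vec\theta).\]

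Formulas \eqref{eq:Zi} and \eqref{eq:Zk} would then follow by forward (resp. reverse) induction on $i$ (resp. $k$), from the trivial base case $i=0$ (resp. $k=n+1$). For the induction step I would substitute the induction hypothesis and the closed form \eqref{eq:nui_line} for $\nu_i^{line}$ into the recurrence and divide through by the common factor $\prod_{j=\beta+1}^{i+\beta-1}g_{j-\beta+1:j}(\vec\theta)/\prod_{j=\beta+1}^{i+\beta-1}g_{j-\beta:j}(\vec\theta)$. After the overlapping index ranges of the three remaining product expressions cancel, the required identity collapses to the one-line algebraic fact $g_{i+1:i+\beta}(\vec\theta)-g_{i:i+\beta}(\vec\theta)=\theta_i$, which is immediate from the definition of $g_{\cdot:\cdot}$.

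For \eqref{eq:Znorm} I would avoid trying to push formula \eqref{eq:Zi} beyond its stated range and instead exploit the fact that $\{n-\beta,\ldots,n\}$ forms a clique of size $\beta+1$, so in any state at most one of these nodes is active. Accounting for the ``none active'' case and the $\beta+1$ cases where a single $n-\beta+d$ (with $d\in\{0,\ldots,\beta\}$) is active (which forces its outside-of-clique neighbours $\{n-2\beta+d,\ldots,n-\beta-1\}$ to be silent) gives
\[Z_n^{line}(\vec\theta) = Z_{1:n-\beta-1}^{line}(\vec\theta) + \sum_{d=0}^{\beta}\nu_{n-\beta+d}^{line}(\vec\theta)\,Z_{1:n-2\beta+d-1}^{line}(\vec\theta),\]
where every $Z_{1:\cdot}^{line}$ factor on the right is in the range of applicability of \eqref{eq:Zi}. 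A direct substitution shows that the $d$-dependent products in $\nu_{n-\beta+d}^{line}$ and $Z_{1:n-2\beta+d-1}^{line}$ telescope against each other so that their product collapses to the $d$-independent expression $\theta_{n-\beta+d}\cdot\prod_{j=\beta+1}^{n-1}g_{j-\beta+1:j}(\vec\theta)/\prod_{j=\beta+1}^{n}g_{j-\beta:j}(\vec\theta)$. Pulling this common factor out of the sum and using $g_{n-\beta:n}(\vec\theta)+\sum_{d=0}^{\beta}\theta_{n-\beta+d}=1$ yields \eqref{eq:Znorm}.

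The main obstacle is careful index bookkeeping rather than any deep insight. In particular, the boundary behaviour of $\nu_i^{line}$ when $i\le\beta$ (where the $\max$ in \eqref{eq:nui_line} truncates the product range) needs attention: this is handled uniformly by noting that the corresponding $Z_{1:i-\beta-1}^{line}$ factor degenerates to $1$ and the truncated products in $\nu_i^{line}$ match the ``missing'' factors, so the same telescoping applies with no exceptional cases.
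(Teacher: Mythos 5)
Your proposal is correct and follows essentially the same route as the paper: the same boundary-node recurrences and induction for \eqref{eq:Zi} and \eqref{eq:Zk}, hinging on the same identity $g_{i:i+\beta}(\vec\theta)+\theta_i=g_{i+1:i+\beta}(\vec\theta)$, and a single-clique decomposition for \eqref{eq:Znorm}. The only (immaterial) difference is that the paper conditions on the first clique $\{1,\ldots,\beta+1\}$ and combines with \eqref{eq:Zk}, whereas you condition on the last clique $\{n-\beta,\ldots,n\}$ and combine with \eqref{eq:Zi} --- a mirror image of the same telescoping argument.
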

\begin{proof}
As the interference range is $\beta$, we have 
\[Z^{line}_{1:i}(\vec \theta) = 1+\sum_{j=1}^i \nu_j^{line}(\vec \theta),\] 
for $i \leq \beta+1$ and 
\[Z^{line}_{1:i}(\vec \theta) = Z^{line}_{1:i-1}(\vec \theta) + \nu_i^{line}(\vec \theta) Z^{line}_{1:i-(\beta+1)}(\vec \theta),\] 
for $i > \beta+1$. For $i=1$ we have
\begin{align*}
Z^{line}_{1:i}(\vec \theta) &= 1+\nu_1^{line}(\vec \theta) = 1+\frac{\theta_1}{g_{1:\beta+1}(\vec \theta)}
= \frac{g_{2:\beta+1}(\vec \theta)}{g_{1:\beta+1}(\vec \theta)}
\end{align*}
as required and for $2 \leq i \leq \beta+1$, we find by induction:
\begin{align*}
\MoveEqLeft Z^{line}_{1:i}(\vec \theta) = Z^{line}_{1:i-1}(\vec \theta)+\nu_i^{line}(\vec \theta)  \\
&= \frac{\prod_{j=\beta+1}^{i+\beta-1} g_{j-\beta+1:j}(\vec \theta)}{\prod_{j=\beta+1}^{i+\beta-1} g_{j-\beta:j}(\vec \theta)} + \theta_i
\frac{\prod_{j=\beta+1}^{i+\beta-1} g_{j-\beta+1:j}(\vec \theta)}{\prod_{j=\beta+1}^{i+\beta} g_{j-\beta:j}(\vec \theta)} \\
&= \frac{\prod_{j=\beta+1}^{i+\beta-1} g_{j-\beta+1:j}(\vec \theta)}{\prod_{j=\beta+1}^{i+\beta} g_{j-\beta:j}(\vec \theta)} \underbrace{\left( g_{i:i+\beta}(\vec \theta) + \theta_i \right)}_{ = g_{i+1:i+\beta}(\vec \theta)} \\
&= \frac{\prod_{j=\beta+1}^{i+\beta} g_{j-\beta+1:j}(\vec \theta)}{\prod_{j=\beta+1}^{i+\beta} g_{j-\beta:j}(\vec \theta)}.
\end{align*}
For $\beta+1 < i < n-\beta$ induction yields
\begin{align*}
Z^{line}_{1:i}(\vec \theta) &= Z^{line}_{1:i-1}(\vec \theta)+\nu_i^{line}(\vec \theta) Z^{line}_{1:i-(\beta+1)}(\vec \theta)\\  
&= \frac{\prod_{j=\beta+1}^{i+\beta-1} g_{j-\beta+1:j}(\vec \theta)}{\prod_{j=\beta+1}^{i+\beta-1} g_{j-\beta:j}(\vec \theta)} + \theta_i
\frac{\prod_{j=i}^{i+\beta-1} g_{j-\beta+1:j}(\vec \theta)}{\prod_{j=i}^{i+\beta} g_{j-\beta:j}(\vec \theta)} \\
& \hspace*{2cm} \times \frac{\prod_{j=\beta+1}^{i-1} g_{j-\beta+1:j}(\vec \theta)}{\prod_{j=\beta+1}^{i-1} g_{j-\beta:j}(\vec \theta)}\\
&= \frac{\prod_{j=\beta+1}^{i+\beta-1} g_{j-\beta+1:j}(\vec \theta)}{\prod_{j=\beta+1}^{i+\beta} g_{j-\beta:j}(\vec \theta)} \underbrace{\left( g_{i:i+\beta}(\vec \theta) + 
\theta_i \right)}_{ = g_{i+1:i+\beta}(\vec \theta)} \\
&= \frac{\prod_{j=\beta+1}^{i+\beta} g_{j-\beta+1:j}(\vec \theta)}{\prod_{j=\beta+1}^{i+\beta} g_{j-\beta:j}(\vec \theta)}.
\end{align*}
The argument to establish \eqref{eq:Zk} proceeds in a very similar fashion. 
The expression \eqref{eq:Znorm} for $Z^{line}_n(\vec \theta)$ is found by noting that at most one node can be active at any point in time within the first $\beta+1$ nodes,
that is,
\begin{align*}
\MoveEqLeft Z^{line}_n(\vec \theta) = Z^{line}_{\beta+2:n}(\vec \theta) + \sum_{i=1}^{\beta+1} \nu_i^{line}(\vec \theta) Z^{line}_{i+\beta+1:n}(\vec \theta) \\  
&= \frac{\prod_{j=\beta+1}^{n-1} g_{j-\beta+1:j}(\vec \theta)}{\prod_{j=\beta+2}^{n} g_{j-\beta:j}(\vec \theta)} + \sum_{i=1}^{\beta+1} \theta_i
\frac{\prod_{j=\beta+1}^{i+\beta-1} g_{j-\beta+1:j}(\vec \theta)}{\prod_{j=\beta+1}^{i+\beta} g_{j-\beta:j}(\vec \theta)} \\
& \hspace*{2cm} \times \frac{\prod_{j=i+\beta}^{n-1} g_{j-\beta+1:j}(\vec \theta)}{\prod_{j=i+\beta+1}^{n} g_{j-\beta:j}(\vec \theta)}\\
&= \frac{\prod_{j=\beta+1}^{n-1} g_{j-\beta+1:j}(\vec \theta)}{\prod_{j=\beta+1}^{n} g_{j-\beta:j}(\vec \theta)}\left( g_{1:\beta+1}(\vec \theta) + \sum_{i=1}^{\beta+1}
 \theta_i \right)\\
 & = \frac{\prod_{j=\beta+1}^{n-1} g_{j-\beta+1:j}(\vec \theta)}{\prod_{j=\beta+1}^{n} g_{j-\beta:j}(\vec \theta)}.
\end{align*}
\end{proof}

The proof now proceeds as follows. 
Due to Lemma \ref{lem:Zline} and the product form in \eqref{eq:stst}, the throughput of node $i$ can be written as
\begin{align*} 
\MoveEqLeft Z^{line}_{1:i-(\beta+1)}(\vec \theta) \nu_i^{line}(\vec \theta) Z^{line}_{i+\beta+1}(\vec \theta)/Z^{line}_n(\vec \theta) = \\
&= \frac{1}{Z^{line}_n(\vec \theta)}  \frac{\prod_{j=\beta+1}^{i-1} g_{j-\beta+1:j}(\vec \theta)}{\prod_{j=\beta+1}^{i-1} g_{j-\beta:j}(\vec \theta)}\\
&  \hspace*{0.5cm} \times \theta_i \frac{\prod_{j=\max(i,\beta+1)}^{\min(i+\beta,n)-1} g_{j-\beta+1:j}(\vec \theta)}{\prod_{j=\max(i,\beta+1)}^{\min(i+\beta,n)} g_{j-\beta:j}(\vec \theta)}
\frac{\prod_{j=i+\beta}^{n-1} g_{j-\beta+1:j}(\vec \theta)}{\prod_{j=i+\beta+1}^{n} g_{j-\beta:j}(\vec \theta)}\\
&= \frac{\theta_i}{Z^{line}_n(\vec \theta)} \left( \frac{\prod_{j=\beta+1}^{n-1} g_{j-\beta+1:j}(\vec \theta)}{\prod_{j=\beta+1}^{n} g_{j-\beta:j}(\vec \theta)} \right) = \theta_i.
\end{align*} 
The uniqueness of the vector $(\nu_1^{line}(\vec \theta),\ldots,\nu_n^{line}(\vec \theta))$ is due to the global
invertibility proven in \cite[Theorem 2]{vandeven5}.

\begin{IEEEbiography}[{\includegraphics[width=1in,height=1.25in,clip,keepaspectratio]{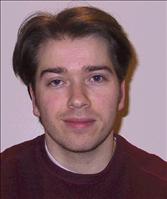}}]{Benny Van Houdt (benny.vanhoudt@uantwerpen.be) received his M.Sc. degree in Mathematics and Computer Science, and a PhD in Science from the University of Antwerp (Belgium) in July 1997, and May 2001, respectively. From August 1997 until September 2001 he held an Assistant position at the University of Antwerp. Starting from October 2001 onwards he has been a postdoctoral fellow of the FWO-Flanders. In 2007, he became a professor at the Mathematics and Computer Science Department of the University of Antwerp.
His main research interest goes to the performance evaluation and stochastic modeling of computer systems and communication networks. He has published several papers, containing both theoretical and practical contributions, in a variety of international journals (e.g., IEEE TON, IEEE JSAC, IEEE Trans. on Inf. Theory, IEEE Trans. on Comm., Performance Evaluation, Journal of Applied Probability, Stochastic Models, Queueing Systems, etc.) and in conference proceedings (e.g., ACM Sigmetrics, Networking, Globecom, Opticomm, ITC, etc.).}
\end{IEEEbiography}
\end{document}